\documentclass{article}
\usepackage{romp}

\usepackage{float}
\usepackage{amsmath}
\usepackage{amsthm}
\usepackage{amssymb}
\usepackage{graphicx}
\usepackage[utf8]{inputenc}

\title{Imaginarity measures induced by relative entropy}
\author{Xiangyu Chen\\School of Mathematics, Harbin Institute of Technology, Harbin 150001, China\\23S012014@stu.hit.edu.cn\\[2ex]
	Qiang Lei\thanks{\quad Supported by National Natural Science Foundation of China (Grants No. 12271474).}
	\\School of Mathematics, Harbin Institute of Technology, Harbin 150001, China
	\\leiqiang@hit.edu.cn}

\begin{document}
		
		\maketitle
		
		\begin{abstract}
		In this paper, we introduce two measures for the resource theory of imaginarity.  One is induced by $\alpha$--$z$--R\'enyi relative entropy and the other, defined for positive definite density matrices, is induced by Tsallis relative operator entropy. The relationships between different imaginarity measures and their properties are also discussed.
		\end{abstract}
		
		\noindent
		{\bf Keywords:}  quantum resource theories, imaginarity measures, $\alpha$--$z$--R\'enyi relative entropy, Tsallis relative operator entropy
		
		\section{Introduction}\label{se1}
		
		Quantum resource theories provide an operational way of dealing with the quantification and manipulation of resources in quantum systems, where resources can refer to various quantum properties such as entanglement \cite{Plenio2005}, coherence \cite{Baumgratz2014}, and so on. Quantum resource theories form an important branch in quantum information theory. Recently, a resource theory of imaginarity has been proposed in \cite{Hickey2018}, which has garnered a significant attention.
		\par 
		Let $\mathcal{H}$ be a $d$-dimensional Hilbert space and $\left\lbrace |m\rangle \right\rbrace_{m=0}^{d-1}$ be a fixed set of orthonormal basis on $\mathcal{H}$. $L(\mathcal{H})$ denotes the set of density matrices on $\mathcal{H}$. In the imaginarity resource theory, the set of free states, called real states, is defined as:
		\begin{equation}
			\mathcal{R}=\left\lbrace \rho\in L(\mathcal{H}):\langle  m|\rho|n \rangle\in\mathbb{R}, \forall 
			m,n\in\{0,\ldots,d-1\}\right\rbrace.
			\nonumber
		\end{equation}
		A free operation $\Lambda$, called a real operation, is defined as $\Lambda(\cdot)=\sum_jK_j\cdot K_j^{\dagger}$, where Kraus operators $\left\lbrace K_j \right\rbrace$ sastisfy $\langle  m|K_j|n \rangle\in\mathbb{R}, \forall m,n\in\{0,\ldots,d-1\}$.
		\par 
		In order to quantify imaginarity, an imaginarity measure $\mathcal{M}$ has been introduced by Hickey and Gour in \cite{Hickey2018}. Generally, to quantify imaginarity, $\mathcal{M}$ should satisfy the following conditions:
		\par\noindent
		$(I_1)$: $\mathcal{M}(\rho)\geqslant 0$ for any quantum state $\rho$, and $\mathcal{M}(\rho)=0\iff \rho\in\mathcal{R}$, that is, $\rho$ equals its complex conjugate $\rho^*$.
		\par\noindent
		$(I_2)$: $\mathcal{M}(\Lambda(\rho))\leqslant\mathcal{M}(\rho)$, where $\Lambda$ is a real operation.
		\par\noindent
		$(I_3)$: $\mathcal{M}(\rho)\geqslant\sum_jp_j\mathcal{M}(\rho_j)$, where $p_j=\mathrm{Tr}(K_j\rho K_j^{\dagger})$, $\rho_j=K_j\rho K_j^{\dagger}/p_j$, $\left\lbrace K_j\right\rbrace $ are Kraus operators of a real operation $\Lambda$.
		\par\noindent
		$(I_4)$: $\sum_jp_j\mathcal{M}(\rho_j)\geqslant\mathcal{M}(\sum_jp_j\rho_j)$ for any $\left\lbrace \rho_j \right\rbrace$ and $\left\lbrace p_j \right\rbrace $, where $p_j\geqslant 0$ and $\sum_jp_j=1$.
		\par 
		Similarly to a framework for quantifying coherence \cite{Yu2016}, under the premises of $(I_1)$ and $(I_2)$, the condition $(I_3)+(I_4)$ is equivalent to the following condition \cite{Xue2021}:
		\par\noindent
		$(I_5)$: $\mathcal{M}(p_1\rho_1\oplus p_2\rho_2)=p_1\mathcal{M}(\rho_1)+p_2\mathcal{M}(\rho_2)$ for any $p_1,p_2$ with $p_1+p_2=1$ and $\rho_1,\rho_2\in\mathcal{R}$.
		\par
		Until now, besides the robustness of imaginarity and the $l_1$ measure of imaginarity given in \cite{Hickey2018}, many other imaginarity measures have been given, for example, the fidelity of imaginarity \cite{Wu2020}, the geometric imaginarity measure \cite{Wu2023}, and the convex roof \cite{Chen2022}. In particular, imaginary measures can also be defined by Umegaki relative entropy \cite{Xue2021} and Tsallis entropy \cite{Xu2023} as:
		\begin{equation}
			\mathcal{M}^V(\rho)=S(\rho||\rho^*)=S[\frac{1}{2}(\rho+\rho^*)]-S(\rho)
			\nonumber
		\end{equation}
		and
		\begin{equation}
			\mathcal{M}_q^T(\rho)=(1-q)K_q(\rho||\rho^*),
			\nonumber
		\end{equation}
		where $S(\rho||\sigma)=\mathrm{Tr}[\rho \mathrm{log}(\rho)]-\mathrm{Tr}[\sigma \mathrm{log}(\sigma)]$,  $K_q(\rho||\sigma)=\frac{1}{1-q}[1-\mathrm{Tr}(\rho^{q}\sigma^{1-q})]$, while $\rho$, $\sigma$ are density matrices, and $\rho^*$ represents the conjugation of $\rho$. The value of the parameter $q$ should belong to the interval $(0,1)$.
		\par 
		In addition to these two relative entropies above, other quantifiers have been used to quantify imaginarity, such as R\'enyi entropy \cite{Petz1986,MuellerLennert2013}, Tsallis entropy \cite{Abe2003,Furuichi2005,Nikoufar2017}, Br\`egman relative entropies \cite{Petz2007,Kostecki2017}, and so on. Moreover, recent research [17--19] has shown that some linear combinations or functions of different quantifiers can also be used as new quantifiers.
		\par 
		In this paper, we focus on imaginarity measures induced by relative entropy. In Section 2, we introduce a quantifier induced by $\alpha$--$z$--Rényi relative entropy which is proved to be a imaginarity measure. Furthermore, we provide properties and examples of the measure. In Section 3, we discuss the relationship between different imaginarity measures. In Section 4, we give an imaginarity measure induced by Tsallis relative operator entropy that holds only for positive definite density matrices. Finally, in Section 5, we provide a brief summary.
		
		\section{Imaginarity measure induced by $\alpha$--$z$--R\'enyi relative entropy}\label{se2}
		
		Different from two specific single-parameter relative entropies $S(\rho||\sigma)$ and $K_q(\rho||\sigma)$ mentioned in Section 1, the $\alpha$--$z$--R\'enyi relative entropy, introduced in \cite{Audenaert2015}, is defined by two parameters $\alpha$ and $z$:
		\begin{equation}
			D_{\alpha,z}(\rho||\sigma):=\frac{1}{\alpha-1}\mathrm{log}f_{\alpha,z}(\rho,\sigma),
			\nonumber
		\end{equation}
		where $f_{\alpha,z}(\rho,\sigma)=\mathrm{Tr}(\sigma^{\frac{1-\alpha}{2z}}\rho^{\frac{\alpha}{z}}\sigma^{\frac{1-\alpha}{2z}})^z$. When $z=\alpha$, $D_{\alpha,z}(\rho||\sigma)$ reduces to the quantum R\'enyi divergence $D_{\alpha}(\rho||\sigma)$ \cite{MuellerLennert2013}.
		
		\subsection{Construction of the measure}
		
		In the coherence resource theory, the quantifier inducing coherence measure can be extended from R\'enyi relative entropy to $\alpha$--$z$--R\'enyi relative entropy \cite{Zhu2019}. We will construct a similar extention for quantum imaginarity resource.
		\begin{lemma}{Lemma }\label{LM1}$\mathrm{\cite{Audenaert2007,Bhatia2018}}$
			Suppose $A$, $B$ are two positive semidefinite matrices, $0<t<1$, $r\geqslant 1$, and $q\geqslant 0$, then the following inequality holds:
			\par 
			(1) $\mathrm{Tr}(A^{\frac{1-t}{2t}}BA^{\frac{1-t}{2t}})^t\leqslant \mathrm{Tr}[(1-t)A+tB]$,
			\par 
			(2) $\mathrm{Tr}(A^rB^rA^r)^q\geqslant \mathrm{Tr}(ABA)^{rq}$.
		\end{lemma}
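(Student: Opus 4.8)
The plan is to prove the two inequalities by separate routes, since (2) is a form of the Araki--Lieb--Thirring inequality while (1) is a trace Young inequality.

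For (2), I would first establish the log-majorization $\lambda\!\big((ABA)^{r}\big)\prec_{\log}\lambda\!\big(A^{r}B^{r}A^{r}\big)$, where $\lambda(\cdot)$ is the eigenvalue vector in decreasing order. The standard tool is the antisymmetric tensor power $\wedge^{k}$: it is multiplicative and commutes with powers, so $\wedge^{k}\!\big((ABA)^{r}\big)=\big(\wedge^{k}A\,\wedge^{k}B\,\wedge^{k}A\big)^{r}$ and $\wedge^{k}(A^{r}B^{r}A^{r})=(\wedge^{k}A)^{r}(\wedge^{k}B)^{r}(\wedge^{k}A)^{r}$; moreover, for a positive matrix $M$ the product of its $k$ largest eigenvalues equals the operator norm of the compound $\wedge^{k}M$. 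Hence the case $k=1$, a norm inequality valid for $r\geq 1$, applied to the positive matrices $\wedge^{k}A,\wedge^{k}B$ propagates to every $k$, giving $\prod_{i=1}^{k}\lambda_{i}\!\big((ABA)^{r}\big)\leq\prod_{i=1}^{k}\lambda_{i}\!\big(A^{r}B^{r}A^{r}\big)$; the full products agree because $\det\!\big((ABA)^{r}\big)=\det\!\big(A^{r}B^{r}A^{r}\big)$, so the relation is genuine log-majorization. I would then convert it into the trace statement: for $q\geq 0$ the function $x\mapsto x^{q}$ composed with $\exp$ is convex and increasing, and log-majorization passes through such functions, yielding $\sum_{i}\lambda_{i}\!\big((ABA)^{r}\big)^{q}\leq\sum_{i}\lambda_{i}\!\big(A^{r}B^{r}A^{r}\big)^{q}$, which is exactly $\mathrm{Tr}(ABA)^{rq}\leq\mathrm{Tr}(A^{r}B^{r}A^{r})^{q}$.

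For (1), write $g(A,B)=\mathrm{Tr}(A^{\frac{1-t}{2t}}BA^{\frac{1-t}{2t}})^{t}$ and note it is homogeneous of degree $1-t$ in $A$ and of degree $t$ in $B$, since $2t\cdot\frac{1-t}{2t}=1-t$. Setting $A=(\mathrm{Tr}A)\tilde A$ and $B=(\mathrm{Tr}B)\tilde B$ with $\tilde A,\tilde B$ density matrices and using the scalar Young inequality $a^{1-t}b^{t}\leq(1-t)a+tb$, the general statement reduces to the normalized bound $g(\tilde A,\tilde B)\leq 1$. Because $\mathrm{Tr}(MM^{\dagger})^{t}=\mathrm{Tr}(M^{\dagger}M)^{t}$ with $M=\tilde A^{\frac{1-t}{2t}}\tilde B^{1/2}$, this reads $\mathrm{Tr}(\tilde B^{1/2}\tilde A^{\frac{1-t}{t}}\tilde B^{1/2})^{t}\leq 1$, i.e. the non-negativity of the quantum R\'enyi relative entropy $D_{t}(\tilde B\|\tilde A)$ for $t\in(0,1)$; I would obtain this from the monotonicity of $\alpha\mapsto D_{\alpha}$ together with the limiting value $-\log\mathrm{Tr}(\Pi_{\tilde B}\tilde A)\geq 0$ as $\alpha\to 0^{+}$, where $\Pi_{\tilde B}$ is the support projection of $\tilde B$.

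The main obstacle is (1). It is tempting to bound $g(A,B)$ above by $\mathrm{Tr}(A^{1-t}B^{t})$ and then invoke the classical trace Young inequality, but the reversed Araki--Lieb--Thirring inequality in fact gives $g(A,B)\geq\mathrm{Tr}(A^{1-t}B^{t})$, so that chaining runs the wrong way and (1) is strictly stronger than a term-by-term scalar estimate. This forces the reduction to density matrices together with the genuinely quantum input of R\'enyi non-negativity; one must be careful to justify the latter without appealing to the data-processing inequality, which for this ($z=\alpha$) R\'enyi divergence fails when $\alpha<\tfrac12$, and also to keep the argument logically independent of the normalization of the measure derived from the lemma in the main text.
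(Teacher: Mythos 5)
The paper itself offers no proof of Lemma~1: it is imported verbatim from \cite{Audenaert2007,Bhatia2018}, so there is no internal argument to compare against. Judged on its own terms, your treatment of part (2) is the standard Araki log-majorization proof of the Araki--Lieb--Thirring inequality: antisymmetric tensor powers reduce everything to the $k=1$ operator-norm inequality $\|ABA\|^{r}\leqslant\|A^{r}B^{r}A^{r}\|$ (the Cordes/McIntosh inequality), and Weyl's majorization principle converts the resulting weak log-majorization into the trace inequality via $x\mapsto x^{q}$. That is correct, provided you are willing to quote the $k=1$ norm inequality, which is really the entire content of (2).

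Part (1) contains a genuine gap. The reduction is fine: homogeneity of degrees $1-t$ and $t$ together with the scalar Young inequality correctly reduces (1) to $\mathrm{Tr}(\tilde A^{\frac{1-t}{2t}}\tilde B\tilde A^{\frac{1-t}{2t}})^{t}\leqslant 1$ for density matrices, i.e.\ to non-negativity of the sandwiched R\'enyi divergence of order $t\in(0,1)$. But the justification you propose for that non-negativity fails: the limit you quote, $-\log\mathrm{Tr}(\Pi_{\tilde B}\tilde A)$, is the $\alpha\to 0^{+}$ limit of the \emph{Petz} R\'enyi divergence, not of the sandwiched one. For the sandwiched divergence the limit is in general strictly smaller (Datta--Leditzky); for instance, with $\tilde B=|\psi\rangle\langle\psi|$ and $\tilde A$ full rank one obtains $-\log$ of the largest eigenvalue of $\tilde A$ whose eigenspace overlaps $|\psi\rangle$, not $-\log\langle\psi|\tilde A|\psi\rangle$. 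Moreover, your own part (2) shows the sandwiched quantity dominates the Petz quantity, which for $\alpha<1$ means the sandwiched divergence is \emph{at most} the Petz divergence, so non-negativity cannot be inherited from the Petz side; proving that the correct $\alpha\to 0^{+}$ limit is non-negative is essentially equivalent to the statement you are trying to establish. The fix is direct and bypasses both monotonicity in $\alpha$ and data processing: the generalized H\"older inequality for Schatten $p$-(quasi)norms, valid for all positive exponents with $\frac{1}{r}=\frac{1}{p}+\frac{1}{q}$, gives
\begin{equation}
\mathrm{Tr}\bigl(A^{\frac{1-t}{2t}}BA^{\frac{1-t}{2t}}\bigr)^{t}
=\bigl\|B^{1/2}A^{\frac{1-t}{2t}}\bigr\|_{2t}^{2t}
\leqslant\bigl\|B^{1/2}\bigr\|_{2}^{2t}\,\bigl\|A^{\frac{1-t}{2t}}\bigr\|_{\frac{2t}{1-t}}^{2t}
=(\mathrm{Tr}B)^{t}(\mathrm{Tr}A)^{1-t},
\nonumber
\end{equation}
after which the scalar Young inequality yields (1) in one line, with no need for the normalization step.
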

		
		\begin{theorem}{Theorem}\label{TH1}
			The parametrized function $\mathcal{M}_{\alpha,z}^R$ of the state $\rho$, given by: 
			\begin{equation}
				\mathcal{M}_{\alpha,z}^R(\rho)=1-f_{\alpha,z}(\rho,\rho^*),
				\nonumber
			\end{equation}
			where $0<\mathrm{max}\left\lbrace \alpha,1-\alpha\right\rbrace \leqslant z<1$ and $f_{\alpha,z}(\rho,\sigma)=\mathrm{Tr}(\sigma^{\frac{1-\alpha}{2z}}\rho^{\frac{\alpha}{z}}\sigma^{\frac{1-\alpha}{2z}})^z$, is an imaginarity measure.
		\end{theorem}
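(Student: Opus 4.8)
The plan is to verify the defining conditions of an imaginarity measure, namely $(I_1)$, $(I_2)$, and $(I_5)$; by the equivalence quoted before the theorem, $(I_5)$ together with $(I_1)$ and $(I_2)$ delivers $(I_3)+(I_4)$, so this suffices. Two structural facts will be used throughout. First, a real operation commutes with complex conjugation: since the Kraus operators have real entries in the fixed basis, $K_j^*=K_j$ and $K_j^\dagger$ is real, whence $(\Lambda(\rho))^*=\sum_j K_j \rho^* K_j^\dagger=\Lambda(\rho^*)$. Second, $f_{\alpha,z}(\rho,\rho^*)=\exp[(\alpha-1)D_{\alpha,z}(\rho\|\rho^*)]$ with $\alpha-1<0$, so statements about $f$ translate into statements about $D_{\alpha,z}$ with the inequality reversed.

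For $(I_1)$ I would first establish $f_{\alpha,z}(\rho,\rho^*)\le 1$ using both parts of Lemma~\ref{LM1}. Writing $A=(\rho^*)^{\frac{1-\alpha}{2z}}$ and $B=\rho^{\frac{\alpha}{z}}$, the choice $r=z/\alpha\ge 1$ and $q=\alpha\ge 0$ gives $A^r=(\rho^*)^{\frac{1-\alpha}{2\alpha}}$, $B^r=\rho$ and $rq=z$, so Lemma~\ref{LM1}(2) yields $f_{\alpha,z}(\rho,\rho^*)=\mathrm{Tr}(ABA)^{rq}\le \mathrm{Tr}(A^rB^rA^r)^q=\mathrm{Tr}((\rho^*)^{\frac{1-\alpha}{2\alpha}}\rho(\rho^*)^{\frac{1-\alpha}{2\alpha}})^\alpha$. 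Applying Lemma~\ref{LM1}(1) with $t=\alpha\in(0,1)$ to the matrices $\rho^*$ and $\rho$ then bounds the right-hand side by $\mathrm{Tr}[(1-\alpha)\rho^*+\alpha\rho]=(1-\alpha)+\alpha=1$. Hence $\mathcal{M}_{\alpha,z}^R(\rho)\ge 0$. For faithfulness, if $\rho=\rho^*$ then by functional calculus $(\rho^{\frac{1-\alpha}{2z}}\rho^{\frac{\alpha}{z}}\rho^{\frac{1-\alpha}{2z}})^z=(\rho^{1/z})^z=\rho$, so $f_{\alpha,z}(\rho,\rho)=\mathrm{Tr}\rho=1$ and $\mathcal{M}_{\alpha,z}^R(\rho)=0$; conversely, $\mathcal{M}_{\alpha,z}^R(\rho)=0$ forces equality in the final application of Lemma~\ref{LM1}(1), whose equality case requires $\rho^*=\rho$, i.e. $\rho\in\mathcal{R}$ (alternatively one invokes faithfulness of $D_{\alpha,z}$ in this parameter range).

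Condition $(I_2)$ I would derive from the data-processing inequality for the $\alpha$--$z$--R\'enyi relative entropy, which is known to hold precisely when $0<\alpha<1$ and $z\ge\max\{\alpha,1-\alpha\}$, exactly the hypothesis of the theorem. Applying it to the real (hence completely positive trace-preserving) operation $\Lambda$ gives $D_{\alpha,z}(\Lambda(\rho)\|\Lambda(\rho^*))\le D_{\alpha,z}(\rho\|\rho^*)$; combining with $\Lambda(\rho^*)=(\Lambda(\rho))^*$ and the reversed-sign dictionary above turns this into $f_{\alpha,z}(\Lambda(\rho),(\Lambda(\rho))^*)\ge f_{\alpha,z}(\rho,\rho^*)$, i.e. $\mathcal{M}_{\alpha,z}^R(\Lambda(\rho))\le\mathcal{M}_{\alpha,z}^R(\rho)$.

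Finally, for $(I_5)$ I would use the multiplicativity of $f_{\alpha,z}$ on direct sums. If $\rho=p_1\rho_1\oplus p_2\rho_2$ then $\rho^*=p_1\rho_1^*\oplus p_2\rho_2^*$, all fractional powers act blockwise, and the total power of $p_i$ carried inside the trace is $2\cdot\frac{1-\alpha}{2z}+\frac{\alpha}{z}=\frac1z$, which after the outer power $z$ collapses to $p_i$. Thus $f_{\alpha,z}(\rho,\rho^*)=p_1 f_{\alpha,z}(\rho_1,\rho_1^*)+p_2 f_{\alpha,z}(\rho_2,\rho_2^*)$, and using $p_1+p_2=1$ one checks $\mathcal{M}_{\alpha,z}^R(p_1\rho_1\oplus p_2\rho_2)=p_1\mathcal{M}_{\alpha,z}^R(\rho_1)+p_2\mathcal{M}_{\alpha,z}^R(\rho_2)$. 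I expect the main obstacle to be $(I_1)$: discovering the correct bookkeeping of exponents ($r=z/\alpha$, $q=t=\alpha$) that lets Lemma~\ref{LM1}(2) and then (1) telescope to the constant $1$, together with the equality analysis needed for faithfulness. The monotonicity $(I_2)$ is conceptually routine once the data-processing inequality is invoked, but it crucially depends on the parameter window $\max\{\alpha,1-\alpha\}\le z<1$ lying inside the region where that inequality is valid.
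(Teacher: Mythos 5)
Your proposal is correct and follows essentially the same route as the paper: the same combination of Lemma~\ref{LM1}(2) with $r=z/\alpha$, $q=\alpha$ followed by Lemma~\ref{LM1}(1) with $t=\alpha$ for $(I_1)$, the data-processing inequality for $D_{\alpha,z}$ together with $\Lambda(\rho^*)=(\Lambda(\rho))^*$ for $(I_2)$, and the blockwise exponent bookkeeping giving additivity of $f_{\alpha,z}$ on direct sums for $(I_5)$. Your explicit tracking of the exponents is in fact cleaner than the paper's displayed chain, which contains a small typographical slip in the intermediate powers.
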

		\begin{proof}
			We only need to prove $(I_1) + (I_2) + (I_5)$.
			\par\noindent
			$(I_1)$: The trace of the density matrix $\rho$ is 1, and the same applies to $\rho^*$, so from Lemma 1 it follows:
			\begin{align*}
				f_{\alpha,z}(\rho,\rho^*)
				&=\mathrm{Tr}[(\rho^*)^{\frac{1-\alpha}{2z}}\rho^{\frac{\alpha}{z}}(\rho^*)^{\frac{1-\alpha}{2z}}]^z
				\\
				&=
				 \mathrm{Tr}[(\rho^*)^{\frac{1-\alpha}{2z}}\rho^{\frac{\alpha}{z}}(\rho^*)^{\frac{1-\alpha}{2z}}]^{\frac{z}{\alpha}\alpha}
				\\
				&\leqslant \mathrm{Tr}[(\rho^*)^{\frac{1-\alpha}{2\alpha}}\rho^{\frac{\alpha}{\alpha}}(\rho^*)^{\frac{1-\alpha}{2\alpha}}]^z
				\\
				&\leqslant \mathrm{Tr}[(1-\alpha)(\rho^*)+\alpha \rho]^z
				\\
				&=1.
				\nonumber
			\end{align*}
			This leads to $\mathcal{M}_{\alpha,z}^R(\rho)\geqslant 0$. When the equality holds, an equivalence chain exists:
			\begin{align*}
				\mathcal{M}_{\alpha,z}^R(\rho)=0\iff f_{\alpha,z}(\rho,\rho^*)=1
				\\
				\iff D_{\alpha,z}(\rho||\rho^*)=0\iff \rho=\rho^*.
				\nonumber
			\end{align*}
			This implies $\rho\in\mathcal{R}$. Thus the condition $(I_1)$ is satisfied.
			\par\noindent
			$(I_2)$: Since the density matrix $\rho$ can be written as $\mathrm{Re}(\rho)+i\mathrm{Im}(\rho)$, we can know that $\Lambda(\rho^*)=[\Lambda(\rho)]^*$ by using Kraus operators. For any completely positive trace-preserving (CPTP) map $\Lambda$, and for any density matrices 
			$\rho$ and $\sigma$, the following inequality holds \cite{Hiai2013}:
			\begin{equation}
				D_{\alpha,z}(\Lambda(\rho)||\Lambda(\sigma))\leqslant D_{\alpha,z}(\rho,\sigma).
				\nonumber
			\end{equation}
			The condition $(I_2)$ holds since
			\begin{equation}
				\mathcal{M}_{\alpha,z}^R(\rho)=1-e^{(\alpha-1)D_{\alpha,z}(\rho,\rho^*)}.
				\nonumber
			\end{equation}
			\par\noindent
			$(I_5)$: Set $\rho=p_1\rho_1\oplus p_2\rho_2$. Then:
			\begin{eqnarray}
				&&f_{\alpha,z}(p_1\rho_1\oplus p_2\rho_2,p_1\rho_1^*\oplus p_2\rho_2^*)
				\nonumber
				\\
				&=&\mathrm{Tr}\left[ \bigoplus_{j=1}^2 p_j^{\frac{1-\alpha}{2z}}p_j^{\frac{\alpha}{z}}p_j^{\frac{1-\alpha}{2z}}(\rho_j^*)^{\frac{1-\alpha}{2z}}\rho_j^{\frac{\alpha}{z}}(\rho_j^*)^{\frac{1-\alpha}{2z}}\right]^z
				\nonumber
				\\
				&=&\sum_{j=1}^{2}p_j\mathrm{Tr}\left[ (\rho_j^*)^{\frac{1-\alpha}{2z}}\rho_j^{\frac{\alpha}{z}}(\rho_j^*)^{\frac{1-\alpha}{2z}}\right]^z
				\nonumber
				\\
				&=&p_1f_{\alpha,z}(\rho_1,\rho_1^*)+p_2f_{\alpha,z}(\rho_2,\rho_2^*).
				\nonumber
			\end{eqnarray}
			For $p_1+p_2=1$, we obtain:
			\begin{equation}
				\mathcal{M}_{\alpha,z}^R(p_1\rho_1\oplus p_2\rho_2)=p_1\mathcal{M}_{\alpha,z}^R(\rho_1)+p_2\mathcal{M}_{\alpha,z}^R(\rho_2).
				\nonumber
			\end{equation}
			Thus $\mathcal{M}_{\alpha,z}^R$ satisfies $(I_5)$.
		\end{proof}
		\par 
		 The range of parameters $\alpha$ and $z$, $0<\mathrm{max}\left\lbrace \alpha,1-\alpha\right\rbrace \leqslant z<1$, is a standing assumption for the rest of this paper.
		 
		\subsection{Properties of $\mathcal{M}_{\alpha,z}^R$}
	
		Relying on the axiomatic properties of $\alpha$--$z$--R\'{e}nyi relative entropy \cite{Audenaert2015}, we can obtain the following results.
		\begin{theorem}{Theorem}\label{TH2}
			The imaginarity measure $\mathcal{M}_{\alpha,z}^R$ has the following properties:
			\par
			(1) $\mathcal{M}_{\alpha,z}^R$ is invariant under any unitary matrix $U$.
			\par
			(2) For any density matrix $\tau$, we have $\mathcal{M}_{\alpha,z}^R(\rho\otimes\tau)\geqslant\mathcal{M}_{\alpha,z}^R(\rho)\mathcal{M}_{\alpha,z}^R(\tau)$, which means that imaginarity measure increases under the tensor product.
		\end{theorem}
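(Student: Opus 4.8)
The plan is to reduce both parts to the explicit form $\mathcal{M}_{\alpha,z}^R(\rho) = 1 - f_{\alpha,z}(\rho,\rho^*)$ and to the behaviour of $f_{\alpha,z}$ under, respectively, unitary conjugation and tensor products, exploiting the axiomatic properties of $\alpha$--$z$--R\'enyi relative entropy already available.

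For part (1), I would first record that conjugation by a unitary commutes with the functional calculus, so that $(U\rho U^\dagger)^s = U\rho^s U^\dagger$ for every real power $s$, while the trace of any operator power is invariant under such a conjugation. The one delicate point is the complex conjugate in the second slot: since $(U\rho U^\dagger)^* = U^*\rho^*(U^*)^\dagger$, the two arguments of $f_{\alpha,z}$ are conjugated by $U$ and by $\bar U$ respectively, and these match only when $U$ is real, i.e. orthogonal, which is precisely the class of free (real) unitaries of the theory. For such $U$ the factors of $U$ and $U^\dagger$ may be pulled outside the whole expression and cancelled under the trace, giving $f_{\alpha,z}(U\rho U^\dagger,(U\rho U^\dagger)^*) = f_{\alpha,z}(\rho,\rho^*)$ and hence invariance. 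More quickly, this also follows directly from the monotonicity $(I_2)$ already proved: an orthogonal conjugation and its inverse are both real operations, so applying $(I_2)$ in both directions forces equality.

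For part (2), the key computation is the multiplicativity of $f_{\alpha,z}$ under tensor products. Writing $X = (\rho^*)^{(1-\alpha)/(2z)}\rho^{\alpha/z}(\rho^*)^{(1-\alpha)/(2z)}$ and the analogous $Y$ for $\tau$, I would use $(\rho\otimes\tau)^{\alpha/z} = \rho^{\alpha/z}\otimes\tau^{\alpha/z}$ together with $(\rho^*\otimes\tau^*)^{(1-\alpha)/(2z)} = (\rho^*)^{(1-\alpha)/(2z)}\otimes(\tau^*)^{(1-\alpha)/(2z)}$ to identify the operator inside the trace as $X\otimes Y$. Then $(X\otimes Y)^z = X^z\otimes Y^z$ and multiplicativity of the trace give $f_{\alpha,z}(\rho\otimes\tau,(\rho\otimes\tau)^*) = \mathrm{Tr}(X^z)\,\mathrm{Tr}(Y^z) = f_{\alpha,z}(\rho,\rho^*)\,f_{\alpha,z}(\tau,\tau^*)$.

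Setting $a = f_{\alpha,z}(\rho,\rho^*)$ and $b = f_{\alpha,z}(\tau,\tau^*)$, the claim becomes the elementary inequality $1 - ab \geq (1-a)(1-b)$, i.e. $a(1-b) + b(1-a) \geq 0$. This holds because $a,b \in [0,1]$: positivity of $f_{\alpha,z}$ is immediate as it is the trace of a power of a positive semidefinite operator, and the bound $f_{\alpha,z}\leq 1$ is exactly what was established in the proof of $(I_1)$. I expect the multiplicativity step---carefully justifying the factorization of operator powers across the tensor product---to be the only point requiring genuine care; the part (1) real-versus-general unitary distinction is the other point to state precisely, and once $a,b\in[0,1]$ is in hand the concluding inequality is trivial.
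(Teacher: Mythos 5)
Your proof of part (2) is exactly the paper's argument: multiplicativity of $f_{\alpha,z}$ under tensor products via factorizing the operator powers across $\otimes$, followed by the elementary inequality $1-ab\geqslant(1-a)(1-b)$ for $a,b\in[0,1]$ (the paper phrases this as $2ab\leqslant a+b$, using the bound $f_{\alpha,z}\leqslant 1$ from the proof of $(I_1)$, just as you do). For part (1) your computation is also the one in the paper, but your caveat about the complex conjugate in the second slot is a genuine and correct observation that the paper glosses over: the paper computes $f_{\alpha,z}(U\rho U^\dagger,\,U\rho^* U^\dagger)$ and reads it as $\mathcal{M}_{\alpha,z}^R(U\rho U^\dagger)$, which is legitimate only when $(U\rho U^\dagger)^*=U\rho^* U^\dagger$, i.e.\ when $U$ is real (orthogonal). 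As you note, invariance under arbitrary unitaries is in fact false: a real pure state such as $|0\rangle\langle 0|$ can be carried by a complex unitary to $|\psi\rangle=(|0\rangle+i|1\rangle)/\sqrt{2}$, for which $\langle\psi|\psi^*\rangle=0$ and hence $f_{\alpha,z}(\rho,\rho^*)=0$ and $\mathcal{M}_{\alpha,z}^R=1$, while the original state has measure $0$. So the theorem's claim should be restricted to real unitaries, and your alternative one-line derivation --- applying $(I_2)$ to the real operation $U\cdot U^\dagger$ and its inverse --- is the cleanest way to obtain that restricted statement. In short, your proposal is correct and follows the paper's route where the paper is sound; where it departs, it is the paper's formulation of (1) that needs amending, not your argument.
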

		\begin{proof}
			Here, for any density matrices $\rho$ and $\tau$, we have:
			\begin{eqnarray}
				&&f_{\alpha,z}(U\rho U^\dagger,U\rho^*U^\dagger)
				\nonumber
				\\
				&=&\mathrm{Tr}\left[ (U\rho^* U^\dagger)^{\frac{1-\alpha}{2z}}(U\rho U^\dagger)^{\frac{\alpha}{z}}(U\rho^* U^\dagger)^{\frac{1-\alpha}{2z}}\right] ^z
				\nonumber
				\\
				&=&\mathrm{Tr}\left[ U(\rho^*)^{\frac{1-\alpha}{2z}}\rho^{\frac{\alpha}{z}}(\rho^*)^{\frac{1-\alpha}{2z}}U^\dagger\right]^z
				\nonumber
				\\
				&=&\mathrm{Tr} \left\lbrace U\left[ (\rho^*)^{\frac{1-\alpha}{2z}}\rho^{\frac{\alpha}{z}}(\rho^*)^{\frac{1-\alpha}{2z}}\right] ^zU^\dagger\right\rbrace 
				\nonumber
				\\
				&=&f_{\alpha,z}(\rho,\rho^*).
				\nonumber
			\end{eqnarray}
			Therefore, $\mathcal{M}_{\alpha,z}^R(U\rho U^*)=\mathcal{M}_{\alpha,z}^R(\rho)$ holds.
			\begin{eqnarray}
				&&f_{\alpha,z}(\rho\otimes\tau,\rho^*\otimes\tau^*)
				\nonumber
				\\
				&=&\mathrm{Tr}\left[ (\rho^*\otimes\tau^*)^{\frac{1-\alpha}{2z}}(\rho\otimes\tau)^{\frac{\alpha}{z}}(\rho^*\otimes\tau^*)^{\frac{1-\alpha}{2z}}\right] ^z
				\nonumber
				\\
				&=&\mathrm{Tr}\left\lbrace \left[ (\rho^*)^{\frac{1-\alpha}{2z}}\otimes(\tau^*)^{\frac{1-\alpha}{2z}}\right] (\rho^{\frac{\alpha}{z}}\otimes\tau^{\frac{\alpha}{z}})\left[ (\rho^*)^{\frac{1-\alpha}{2z}}\otimes(\tau^*)^{\frac{1-\alpha}{2z}}\right] \right\rbrace ^z
				\nonumber
				\\
				&=&\mathrm{Tr}\left\lbrace \left[ 
				(\rho^*)^{\frac{1-\alpha}{2z}}\rho^{\frac{\alpha}{z}}(\rho^*)^{\frac{1-\alpha}{2z}}\right]\otimes\left[(\tau^*)^{\frac{1-\alpha}{2z}}\tau^{\frac{\alpha}{z}}(\tau^*)^{\frac{1-\alpha}{2z}}\right]\right\rbrace ^z
				\nonumber
				\\
				&=&\mathrm{Tr}\left\lbrace \left[ 
				(\rho^*)^{\frac{1-\alpha}{2z}}\rho^{\frac{\alpha}{z}}(\rho^*)^{\frac{1-\alpha}{2z}}\right]^z\otimes\left[(\tau^*)^{\frac{1-\alpha}{2z}}\tau^{\frac{\alpha}{z}}(\tau^*)^{\frac{1-\alpha}{2z}}\right]^z\right\rbrace
				\nonumber
				\\
				&=&\mathrm{Tr}\left[ 
				(\rho^*)^{\frac{1-\alpha}{2z}}\rho^{\frac{\alpha}{z}}(\rho^*)^{\frac{1-\alpha}{2z}}\right]^z\mathrm{Tr}\left[(\tau^*)^{\frac{1-\alpha}{2z}}\tau^{\frac{\alpha}{z}}(\tau^*)^{\frac{1-\alpha}{2z}}\right]^z
				\nonumber
				\\
				&=&f_{\alpha,z}(\rho,\rho^*)f_{\alpha,z}(\tau,\tau^*).
				\nonumber
			\end{eqnarray}
			Due to $f_{\alpha,z}(\rho,\rho^*)\leqslant 1$, $f_{\alpha,z}(\tau,\tau^*) \leqslant 1$, we can know that $2f_{\alpha,z}(\rho,\rho^*)f_{\alpha,z}(\tau,\tau^*)\leqslant f_{\alpha,z}(\rho,\rho^*)+f_{\alpha,z}(\tau,\tau^*)$. Then $1-f_{\alpha,z}(\rho\otimes\tau,\rho^*\otimes\tau^*)\geqslant [1-f_{\alpha,z}(\rho,\rho^*)][1-f_{\alpha,z}(\tau,\tau^*)]$ holds, implying property (2).
		\end{proof}

		\section{Relationships between imaginarity measures}\label{se3}
		
		\subsection{Monotonicity with respect to parameters}
		
		For simplicity, we denote $\mathcal{M}_{\alpha,z}^R$ as $\mathcal{M}_{\alpha}^R$ and $D_{\alpha,z}$ as $D_{\alpha}$ when $\alpha=z$. 
		\begin{theorem}{Theorem}\label{TH3}
			The imaginarity measure $\mathcal{M}_{\alpha,z}^R$ satisfies:
			\par
			(1) $\mathcal{M}_{\alpha,z}^R=\mathcal{M}_{1-\alpha,z}^R$;
			\par
			(2) $\mathcal{M}_{\alpha_1}^R\leqslant\mathcal{M}_{\alpha_2}^R$ if $\alpha_1\leqslant\alpha_2$;
			\par
			(3) $\mathcal{M}_{\alpha,z_1}^R\leqslant\mathcal{M}_{\alpha,z_2}^R$ if $z_1\leqslant z_2$.
		\end{theorem}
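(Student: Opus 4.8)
The plan is to handle the three items by three different mechanisms, in each case reducing the statement to a comparison of the scalar $f_{\alpha,z}(\rho,\rho^*)=\mathrm{Tr}[((\rho^*)^{\frac{1-\alpha}{2z}}\rho^{\frac{\alpha}{z}}(\rho^*)^{\frac{1-\alpha}{2z}})^z]$ and then translating back through $\mathcal{M}_{\alpha,z}^R=1-f_{\alpha,z}(\rho,\rho^*)$, so that an inequality for $f$ flips to the reverse inequality for $\mathcal{M}$.

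For (1), I would first strip the symmetric sandwich. Since $XY$ and $YX$ share the same nonzero spectrum, one has $\mathrm{Tr}[(A^{1/2}BA^{1/2})^z]=\mathrm{Tr}[(BA)^z]$ for positive semidefinite $A,B$; this rewrites $f_{\alpha,z}(\rho,\rho^*)$ in the product form $\mathrm{Tr}[(\rho^{\frac{\alpha}{z}}(\rho^*)^{\frac{1-\alpha}{z}})^z]$ and $f_{1-\alpha,z}(\rho,\rho^*)$ as $\mathrm{Tr}[(\rho^{\frac{1-\alpha}{z}}(\rho^*)^{\frac{\alpha}{z}})^z]$. Then I would invoke $\mathrm{Tr}[M^z]=\mathrm{Tr}[(M^{T})^z]$ together with the Hermiticity identity $\rho^{*}=\rho^{T}$ (whence $(\rho^{s})^{T}=(\rho^{*})^{s}$ and $((\rho^{*})^{s})^{T}=\rho^{s}$): transposing the first product form, and using $(XY)^T=Y^TX^T$, produces exactly the second. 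This gives $f_{\alpha,z}=f_{1-\alpha,z}$ and hence $\mathcal{M}_{\alpha,z}^R=\mathcal{M}_{1-\alpha,z}^R$.

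For (3), the right tool is already available in Lemma~\ref{LM1}(2). Fixing $\alpha$ and taking $z_1\leqslant z_2$, I would set $A=(\rho^{*})^{\frac{1-\alpha}{2z_2}}$, $B=\rho^{\frac{\alpha}{z_2}}$, $r=z_2/z_1\geqslant 1$ and $q=z_1\geqslant 0$. Then $A^{r}B^{r}A^{r}=(\rho^{*})^{\frac{1-\alpha}{2z_1}}\rho^{\frac{\alpha}{z_1}}(\rho^{*})^{\frac{1-\alpha}{2z_1}}$ and $rq=z_2$, so the inequality $\mathrm{Tr}(A^{r}B^{r}A^{r})^{q}\geqslant\mathrm{Tr}(ABA)^{rq}$ reads precisely $f_{\alpha,z_1}(\rho,\rho^*)\geqslant f_{\alpha,z_2}(\rho,\rho^*)$, and taking complements yields $\mathcal{M}_{\alpha,z_1}^R\leqslant\mathcal{M}_{\alpha,z_2}^R$.

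Part (2) is where I expect the real obstacle, because here $\alpha=z$ so both parameters move simultaneously and Lemma~\ref{LM1}(2) no longer applies directly: it would force the central factor $\rho^{\alpha/z}=\rho$ to change its exponent. My plan is to pass to the sandwiched R\'enyi divergence $D_\alpha=D_{\alpha,\alpha}$ and its known monotonicity in $\alpha$ \cite{MuellerLennert2013}, writing $\mathcal{M}_\alpha^R=1-e^{(\alpha-1)D_\alpha(\rho\|\rho^*)}$. The delicate point is that the prefactor $(\alpha-1)$ is itself $\alpha$-dependent and negative on the standing range $\alpha\in[1/2,1)$, so monotonicity of $D_\alpha$ does not transfer automatically to $\mathcal{M}_\alpha^R$; the combined quantity $(\alpha-1)D_\alpha=\log f_{\alpha,\alpha}$ must be controlled directly. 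I would therefore study $g(\alpha)=f_{\alpha,\alpha}(\rho,\rho^*)$ as a single function and determine its monotonicity, using the boundary values $g(1/2)=\mathrm{Tr}\,|\rho^{1/2}(\rho^{*})^{1/2}|$ (the Uhlmann fidelity of $\rho$ with $\rho^*$) and $g(1^{-})=1$ to pin down the correct direction of the inequality before committing to the sign bookkeeping, since this is exactly the step where a misstep in direction would surface.
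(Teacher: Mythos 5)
Your treatments of (1) and (3) are correct and essentially coincide with the paper's: for (1) the paper takes the complex conjugate of the (real) trace rather than the transpose, which is the same computation, and for (3) the paper simply cites Lemma~1(2) with ``$z_2/z_1\geqslant 1$'' while you supply the exact substitution $A=(\rho^*)^{\frac{1-\alpha}{2z_2}}$, $B=\rho^{\frac{\alpha}{z_2}}$, $r=z_2/z_1$, $q=z_1$, which is precisely what is needed.

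Part (2) is where your suspicion is not merely well founded but decisive, and you should follow your own boundary check to its conclusion rather than look for a proof. On the admissible diagonal range $\alpha=z\in[\tfrac12,1)$ the function $g(\alpha)=f_{\alpha,\alpha}(\rho,\rho^*)=e^{(\alpha-1)D_{\alpha}(\rho\|\rho^*)}$ satisfies $g(\tfrac12)=\mathrm{Tr}\,\bigl|\rho^{1/2}(\rho^*)^{1/2}\bigr|<1$ whenever $\rho\neq\rho^*$, while for full-rank $\rho$ one has $D_{\alpha}(\rho\|\rho^*)$ bounded as $\alpha\to1^-$ and hence $g(\alpha)\to1$. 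So $g$ must increase somewhere on $[\tfrac12,1)$, i.e.\ $\mathcal{M}^R_{\alpha}=1-g(\alpha)$ must decrease there: statement (2) is false as written (any non-real full-rank state, e.g.\ the paper's own $\rho_0$, with $\alpha_1=\tfrac12$ and $\alpha_2$ close to $1$, is a counterexample). The paper's argument for (2) commits exactly the sign error you flag: from $D_{\alpha_1}\leqslant D_{\alpha_2}$ it infers $(\alpha_1-1)D_{\alpha_1}\geqslant(\alpha_2-1)D_{\alpha_2}$, but since the two negative prefactors differ this is the indeterminate comparison of $(1-\alpha_1)D_{\alpha_1}$ (larger factor, smaller divergence) against $(1-\alpha_2)D_{\alpha_2}$ (smaller factor, larger divergence), and the limit $\alpha_2\to1^-$ shows it resolves the wrong way. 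In short: your proposal fully proves (1) and (3); for (2) the correct outcome of the analysis you outline is a refutation, not a proof, and no strategy can close that gap.
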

		\begin{proof}
			We can obtain (1) by taking the conjugate:
			\begin{align}
				\left\lbrace \mathrm{Tr}\left[ (\rho^*)^{\frac{1-\alpha}{2z}}\rho^{\frac{\alpha}{z}}(\rho^*)^{\frac{1-\alpha}{2z}}\right] ^z\right\rbrace^*=\mathrm{Tr}\left[\rho^{\frac{1-\alpha}{2z}}(\rho^*)^{\frac{\alpha}{z}}\rho^{\frac{1-\alpha}{2z}}\right]^z=\mathrm{Tr}\left[(\rho^*)^{\frac{\alpha}{2z}}\rho^{\frac{1-\alpha}{z}}(\rho^*)^{\frac{\alpha}{2z}}\right]^z.
				\nonumber
			\end{align}
			If $\alpha_1\leqslant \alpha_2$, then $D_{\alpha_1}(\rho||\sigma)\leqslant D_{\alpha_2}(\rho||\sigma)$ \cite{MuellerLennert2013}. From $\alpha-1<0$ and monotonicity of the logarithmic function, we have $f_{\alpha_1,\alpha_1}\geqslant f_{\alpha_2,\alpha_2}$, which gives us (2). The proof of (3) requires only to consider $\frac{z_2}{z_1}\geqslant 1$, which directly leads to $f_{\alpha,z_1}\geqslant f_{\alpha,z_2}$ by Lemma 1.
		\end{proof}
		\begin{theorem}{Theorem}\label{TH4}
			The imaginarity measure $\mathcal{M}_q^T$ satisfies:
			\par
			(1) $\mathcal{M}_q^T(\rho)=\mathcal{M}_{1-q}^T(\rho)$;
			\par
			(2) If $q_1\leqslant q_2\leqslant\frac{1}{2}$, then $\mathcal{M}_{q_1}^T(\rho)\leqslant\mathcal{M}_{q_2}^T(\rho)$.
		\end{theorem}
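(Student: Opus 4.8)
The plan is to first replace the measure with a more tractable scalar. Since $K_q(\rho\|\sigma)=\frac{1}{1-q}[1-\mathrm{Tr}(\rho^q\sigma^{1-q})]$, the prefactor $1-q$ cancels in $\mathcal{M}_q^T=(1-q)K_q(\rho\|\rho^*)$, leaving
\begin{equation}
\mathcal{M}_q^T(\rho)=1-\mathrm{Tr}\!\left(\rho^q(\rho^*)^{1-q}\right).
\nonumber
\end{equation}
Writing $g(q):=\mathrm{Tr}(\rho^q(\rho^*)^{1-q})$, both assertions become statements about a single scalar function on $(0,1)$: part (1) is the symmetry $g(q)=g(1-q)$, and part (2) is that $g$ is non-increasing on $(0,\tfrac12]$.

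For (1) I would combine complex conjugation with a positivity remark. Because $\rho^q$ and $(\rho^*)^{1-q}$ are Hermitian positive semidefinite, the trace of their product is real, so $g(q)=\overline{g(q)}$. Conjugating entrywise and using that entrywise conjugation is a ring homomorphism together with $\overline{\rho^q}=(\rho^*)^q$ and $\overline{(\rho^*)^{1-q}}=\rho^{1-q}$, followed by cyclicity of the trace, gives $\overline{g(q)}=\mathrm{Tr}((\rho^*)^q\rho^{1-q})=g(1-q)$. Chaining the two equalities yields $g(q)=g(1-q)$, i.e. $\mathcal{M}_q^T(\rho)=\mathcal{M}_{1-q}^T(\rho)$.

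For (2) the key step is to expose the convexity of $g$ in $q$. Diagonalising $\rho=\sum_i\lambda_i|v_i\rangle\langle v_i|$ and $\rho^*=\sum_j\mu_j|w_j\rangle\langle w_j|$ and expanding the trace gives $g(q)=\sum_{i,j}c_{ij}\,\lambda_i^{\,q}\mu_j^{\,1-q}$ with $c_{ij}=|\langle v_i|w_j\rangle|^2\ge 0$. For $\lambda_i,\mu_j>0$ each summand equals $\mu_j\exp\!\big(q\log(\lambda_i/\mu_j)\big)$, which is convex in $q$ (the remaining summands vanish), so $g$ is convex on $(0,1)$. Since part (1) shows $g$ is symmetric about $q=\tfrac12$, it follows that $g$ is non-increasing on $(0,\tfrac12]$: for $q_1\le q_2\le\tfrac12$ one writes $q_2=(1-\theta)q_1+\theta(1-q_1)$ with $\theta\in[0,1]$ (possible because $q_2\le\tfrac12\le 1-q_1$), and then $g(q_2)\le(1-\theta)g(q_1)+\theta g(1-q_1)=g(q_1)$ by convexity and the symmetry $g(1-q_1)=g(q_1)$. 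Hence $\mathcal{M}_{q_1}^T(\rho)\le\mathcal{M}_{q_2}^T(\rho)$.

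I expect the convexity together with the symmetry-to-monotonicity deduction to be the crux of the argument; the realness of the trace in part (1) and the eigenvalue expansion in part (2) are routine once set up. An alternative route for (2) is to invoke the known monotonicity in the order parameter of the Petz--R\'enyi relative entropy $q\mapsto D_q(\rho\|\rho^*)=\frac{1}{q-1}\log g(q)$, but the eigenvalue expansion makes the convexity transparent and keeps the proof self-contained. The only technical point to watch is the treatment of vanishing eigenvalues, which is harmless because the exponents $q$ and $1-q$ lie strictly in $(0,1)$.
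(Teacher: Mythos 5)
Your proof is correct, and its overall architecture coincides with the paper's: reduce $\mathcal{M}_q^T(\rho)$ to $1-g(q)$ with $g(q)=\mathrm{Tr}(\rho^q(\rho^*)^{1-q})$, prove the symmetry $g(q)=g(1-q)$ by conjugation, establish convexity of $g$ on $(0,1)$, and combine the two to get monotonicity on $(0,\tfrac12]$. The one genuine difference is how convexity is obtained: the paper imports it from an external result, citing Bhattacharya's log-convexity theorem for $q\mapsto\mathrm{Tr}(\rho^q(\rho^*)^{1-q})$ and then using that log-convex functions are convex, whereas you prove it directly via the eigenvalue expansion $g(q)=\sum_{i,j}|\langle v_i|w_j\rangle|^2\lambda_i^{\,q}\mu_j^{\,1-q}$, in which each nonvanishing summand is an exponential of an affine function of $q$ and hence convex. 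Your route is more elementary and self-contained (and in fact yields the stronger log-convexity for free, since each summand is log-linear and sums of log-convex functions are log-convex), at the cost of a short computation the paper avoids by citation. You also spell out the symmetry-plus-convexity-implies-monotonicity step (writing $q_2=(1-\theta)q_1+\theta(1-q_1)$), which the paper leaves implicit, and you correctly note that vanishing eigenvalues contribute identically zero terms for $q\in(0,1)$, so they do not disturb the convexity argument.
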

		\begin{proof}
			The proof of property (1) is similar to the proof of Theorem 3.(1). It remains to prove (2). From \cite{Bhattacharya2023} we know that $\mathrm{Tr}(\rho^q(\rho^*)^{1-q})$ is a log-convex function when $q\in (0,1)$. A log-convex function must be convex \cite{Conway1978}. Combining with (1), we can conclude the inequality $\mathrm{Tr}(\rho^{q_1}(\rho^*)^{1-q_1})\geqslant \mathrm{Tr}(\rho^{q_2}(\rho^*)^{1-q_2})$ holds when $q_1\leqslant q_2\leqslant\frac{1}{2}$.
		\end{proof}
		
		\subsection{The magnitude relation between the two imaginarity measures}
		
		First, let's compare the relationship between $\mathcal{M}_{\alpha,z}^R$ and $\mathcal{M}_q^T$. Since $\frac{1}{z}>1$, we can infer from Lemma 1 that:
		\begin{align*}
			\mathrm{Tr}(\sigma^{\frac{1-\alpha}{2z}}\rho^{\frac{\alpha}{z}}\sigma^{\frac{1-\alpha}{2z}})^z&\geqslant \mathrm{Tr}(\sigma^{\frac{1-\alpha}{2}}\rho^{\alpha}\sigma^{\frac{1-\alpha}{2}})^{\frac{1}{z}z}
			\\
			&=\mathrm{Tr}(\sigma^{\frac{1-\alpha}{2}}\rho^{\alpha}\sigma^{\frac{1-\alpha}{2}})
			\\
			&=\mathrm{Tr}(\rho^{\alpha}\sigma^{1-\alpha}).
		\end{align*}
		Therefore, in combination with Theorem 3 and Theorem 4, we can obtain the following theorem:
		\begin{theorem}{Theorem}\label{TH5}
			For any quantum state $\rho$,
			\begin{equation}
				\mathcal{M}_{\alpha}^R(\rho)\leqslant\mathcal{M}_{\alpha,z}^R(\rho)\leqslant\mathcal{M}_{\alpha}^T(\rho).
				\nonumber
			\end{equation}
		\end{theorem}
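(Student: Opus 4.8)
The plan is to split the two-sided bound into its two halves and recognize each as a specialization of material already assembled. Writing $\mathcal{M}_{\alpha,z}^R(\rho)=1-f_{\alpha,z}(\rho,\rho^*)$, recalling that $\mathcal{M}_{\alpha}^R=\mathcal{M}_{\alpha,\alpha}^R$, and unwinding the definition $\mathcal{M}_{\alpha}^T(\rho)=(1-\alpha)K_\alpha(\rho\|\rho^*)=1-\mathrm{Tr}(\rho^{\alpha}(\rho^*)^{1-\alpha})$, both claimed inequalities collapse to a comparison of the single scalar $f_{\alpha,z}(\rho,\rho^*)$ against two reference quantities. Concretely, the left inequality $\mathcal{M}_{\alpha}^R\leqslant\mathcal{M}_{\alpha,z}^R$ is equivalent to $f_{\alpha,z}(\rho,\rho^*)\leqslant f_{\alpha,\alpha}(\rho,\rho^*)$, while the right inequality $\mathcal{M}_{\alpha,z}^R\leqslant\mathcal{M}_{\alpha}^T$ is equivalent to $f_{\alpha,z}(\rho,\rho^*)\geqslant\mathrm{Tr}(\rho^{\alpha}(\rho^*)^{1-\alpha})$, since subtracting from $1$ reverses each inequality.

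For the right inequality I would invoke the display immediately preceding the theorem with $\sigma=\rho^*$. That computation is precisely Lemma \ref{LM1}(2) applied with $r=1/z\geqslant 1$ (valid because $z<1$), $q=z$, $A=(\rho^*)^{\frac{1-\alpha}{2}}$, and $B=\rho^{\alpha}$, giving $f_{\alpha,z}(\rho,\rho^*)\geqslant\mathrm{Tr}\bigl((\rho^*)^{\frac{1-\alpha}{2}}\rho^{\alpha}(\rho^*)^{\frac{1-\alpha}{2}}\bigr)=\mathrm{Tr}(\rho^{\alpha}(\rho^*)^{1-\alpha})$, the last equality by cyclicity of the trace. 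This immediately yields $\mathcal{M}_{\alpha,z}^R(\rho)\leqslant\mathcal{M}_{\alpha}^T(\rho)$ with essentially no further work.

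For the left inequality I would appeal to the monotonicity in $z$ already proved in Theorem \ref{TH3}(3), where it is shown that $f_{\alpha,z}$ is nonincreasing in $z$ (equivalently $\mathcal{M}_{\alpha,z_1}^R\leqslant\mathcal{M}_{\alpha,z_2}^R$ for $z_1\leqslant z_2$). The only genuine obstacle — and it is a bookkeeping one rather than an analytic one — is confirming that the relevant parameters sit in the regime where this applies, i.e. that $z\geqslant\alpha$ so that taking $z_1=\alpha\leqslant z_2=z$ is legitimate. This is exactly where the standing assumption $\max\{\alpha,1-\alpha\}\leqslant z$ is used: the symbol $\mathcal{M}_{\alpha}^R=\mathcal{M}_{\alpha,\alpha}^R$ only makes sense when $\alpha\geqslant\max\{\alpha,1-\alpha\}$, that is $\alpha\geqslant\tfrac12$, in which case $\max\{\alpha,1-\alpha\}=\alpha$ and hence $z\geqslant\alpha$ automatically; the case $\alpha<\tfrac12$ reduces to this one through the symmetry $\mathcal{M}_{\alpha,z}^R=\mathcal{M}_{1-\alpha,z}^R$ of Theorem \ref{TH3}(1). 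Chaining the two halves then gives the full statement. I expect no difficulty beyond tracking these parameter constraints, since the analytic content is entirely carried by Lemma \ref{LM1} and by the monotonicity established in Theorems \ref{TH3} and \ref{TH4}.
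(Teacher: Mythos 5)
Your proposal is correct and follows essentially the same route as the paper: the right-hand inequality is exactly the paper's displayed application of Lemma \ref{LM1}(2) with $r=1/z$ (the Araki--Lieb--Thirring step) preceding the theorem, and the left-hand inequality is the monotonicity in $z$ from Theorem \ref{TH3}(3) applied with $z_1=\alpha$, $z_2=z$. Your explicit tracking of the parameter constraint (that $\mathcal{M}_{\alpha,\alpha}^R$ requires $\alpha\geqslant\tfrac12$, with the other case handled by the symmetry of Theorem \ref{TH3}(1)) is a point the paper glosses over, but it does not change the argument.
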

		\par 
		Using Theorem 5, we can derive some inequalities, such as:
		\par
		(1) $\mathcal{M}_{\alpha}^R(\rho\otimes\tau)\leqslant \mathcal{M}_{\alpha}^T(\rho\otimes\tau)$;
		\par
		(2) $\mathcal{M}_{\alpha}^R(\rho^{\otimes n})\leqslant \mathcal{M}_{\alpha}^T(\rho^{\otimes n})$.
		\par 
		However, when $q\neq \alpha$, $\mathcal{M}_{q}^T$ can not be simply included in Theorem 5.
		\begin{example}{Example}
			Take the following density matrix as an exmple:
			\begin{equation}
				\rho_0=\frac{1}{10}
				\begin{pmatrix}
					4 & 3-i
					\\
					3+i & 6
				\end{pmatrix}.
				\nonumber
			\end{equation}
			Then
			\begin{equation}
				\mathcal{M}_{0.3}^T(\rho_0)\leqslant\mathcal{M}_{0.5}^R(\rho_0)\leqslant\mathcal{M}_{0.5}^T(\rho_0).
				\nonumber
			\end{equation}
		\end{example}
		
		For the composite systems, we have:
		\begin{theorem}{Theorem}\label{TH6}
			For any density matrix $\rho$ and $\tau$, for any $q$, $\alpha$, and $z$ satisfying $q\in(0,1)$ and $0<\mathrm{max}\left\lbrace \alpha,1-\alpha\right\rbrace \leqslant z<1$, if $\mathcal{M}_{\alpha,z}^R(\rho)\leqslant \mathcal{M}_q^T(\rho)$ and $\mathcal{M}_{\alpha,z}^R(\tau)\leqslant \mathcal{M}_q^T(\tau)$ holds, then $\mathcal{M}_{\alpha,z}^R(\rho\otimes\tau)\leqslant \mathcal{M}_q^T(\rho\otimes\tau)$ holds.
		\end{theorem}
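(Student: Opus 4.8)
The plan is to reduce everything to the two multiplicative trace quantities underlying the two measures and to exploit the fact that the hypothesis is \emph{exactly} a pointwise comparison of these quantities. Write $R(\rho) := f_{\alpha,z}(\rho,\rho^*)$ and $T(\rho) := \mathrm{Tr}\bigl(\rho^q (\rho^*)^{1-q}\bigr)$, so that $\mathcal{M}_{\alpha,z}^R(\rho) = 1 - R(\rho)$ and, unwinding the definition of $K_q$, $\mathcal{M}_q^T(\rho) = 1 - T(\rho)$. Under this notation every inequality of the form $\mathcal{M}_{\alpha,z}^R \le \mathcal{M}_q^T$ flips into the reverse inequality $R \ge T$; in particular the two hypotheses read $R(\rho) \ge T(\rho)$ and $R(\tau) \ge T(\tau)$, and the desired conclusion reads $R(\rho\otimes\tau) \ge T(\rho\otimes\tau)$.

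First I would record that both $R$ and $T$ are multiplicative under tensor products. For $R$ this is already contained in the proof of Theorem~\ref{TH2}, where the identity $f_{\alpha,z}(\rho\otimes\tau,\rho^*\otimes\tau^*) = f_{\alpha,z}(\rho,\rho^*)\, f_{\alpha,z}(\tau,\tau^*)$ is established. For $T$ the same mechanism applies: using $(\rho\otimes\tau)^q = \rho^q\otimes\tau^q$ (functional calculus respects tensor products for positive powers), $(\rho\otimes\tau)^* = \rho^*\otimes\tau^*$, and the mixed-product rule, one obtains $(\rho\otimes\tau)^q\bigl((\rho\otimes\tau)^*\bigr)^{1-q} = \bigl(\rho^q(\rho^*)^{1-q}\bigr)\otimes\bigl(\tau^q(\tau^*)^{1-q}\bigr)$, and taking the trace (which factorizes over $\otimes$) yields $T(\rho\otimes\tau) = T(\rho)\,T(\tau)$. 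Thus the claim to be proved reduces to $R(\rho)R(\tau) \ge T(\rho)T(\tau)$.

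Next I would note the nonnegativity of all four factors. Each of $R(\rho),R(\tau)$ is nonnegative as the trace of a $z$-th power of a positive semidefinite matrix, and each of $T(\rho),T(\tau)$ is nonnegative because $\mathrm{Tr}\bigl(\rho^q(\rho^*)^{1-q}\bigr) = \mathrm{Tr}\bigl((\rho^*)^{(1-q)/2}\rho^q(\rho^*)^{(1-q)/2}\bigr) \ge 0$, the argument being positive semidefinite. With $R(\rho)\ge T(\rho)\ge 0$ and $R(\tau)\ge T(\tau)\ge 0$ in hand, the conclusion follows from the elementary monotonicity of multiplication on the nonnegative reals: multiplying $R(\rho)\ge T(\rho)$ by $R(\tau)\ge 0$ and $R(\tau)\ge T(\tau)$ by $T(\rho)\ge 0$ gives $R(\rho)R(\tau) \ge T(\rho)R(\tau) \ge T(\rho)T(\tau)$, which is precisely $R(\rho\otimes\tau)\ge T(\rho\otimes\tau)$, i.e.\ $\mathcal{M}_{\alpha,z}^R(\rho\otimes\tau)\le\mathcal{M}_q^T(\rho\otimes\tau)$.

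There is no serious obstacle here; the entire content sits in the two multiplicativity identities, which are the same factorization already used in Theorem~\ref{TH2}. The only point requiring a moment's care is confirming that $T(\rho)=\mathrm{Tr}\bigl(\rho^q(\rho^*)^{1-q}\bigr)$ is genuinely nonnegative, since the product-comparison step would collapse if any factor could be negative; for this reason I would isolate the symmetrization $\mathrm{Tr}\bigl(\rho^q(\rho^*)^{1-q}\bigr) = \mathrm{Tr}\bigl((\rho^*)^{(1-q)/2}\rho^q(\rho^*)^{(1-q)/2}\bigr)$ explicitly rather than leave it implicit.
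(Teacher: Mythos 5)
Your proof is correct and follows essentially the same route as the paper: both reduce the hypotheses to $f_{\alpha,z}(\rho,\rho^*)\geqslant\mathrm{Tr}[\rho^q(\rho^*)^{1-q}]$ (and likewise for $\tau$) and then multiply these inequalities using the multiplicativity of both trace quantities under tensor products. Your version is merely more careful, making explicit the nonnegativity of the factors and the factorization $T(\rho\otimes\tau)=T(\rho)T(\tau)$, which the paper leaves implicit.
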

		\begin{proof}
			Follows directly from the 
			definitions of $\mathcal{M}^R_{\alpha,z}$ and $\mathcal{M}^T_{\alpha}$. In fact, from the conditions we can know that $f_{\alpha,z}(\rho,\rho^*)\geqslant\mathrm{Tr}[\rho^{q}(\rho^*)^{1-q}]$ and $f_{\alpha,z}(\tau,\tau^*)\geqslant\mathrm{Tr}[\tau^{q}(\tau^*)^{1-q}]$. Then $f_{\alpha,z}(\rho\otimes\tau,\rho^*\otimes\tau^*)\geqslant\mathrm{Tr}[(\rho\otimes\tau)^{q}(\rho^*\otimes\tau^*)^{1-q}]$. This leads to the result.
		\end{proof}
		
		\section{Imaginarity measure of positive definite density matrices}\label{se4}
		
		As mentioned in Section 1, some parameterized functions cannot be considered as appropriate resource measures (cf. also \cite{Zhao2018}). In order to tackle this problem, in the following, we propose a new approach for measures of quantum imaginarity. However, the domain will be changed from all quantum states to positive definite density matrices.
		\par 
		In order to distinguish positive definite density matrices, we will denote them by $\delta$ instead of $\rho$.
		
		\subsection{Imaginarity measure induced by Tsallis relative operator entropy}
		
		In \cite{Furuichi2005} the authors introduced the Tsallis relative operator entropy, defined as follows:
		\begin{equation}
			T_{\lambda}(\delta||\eta):=\delta^{\frac{1}{2}}\mathrm{ln}_{\lambda}(\delta^{-\frac{1}{2}}\eta\delta^{-\frac{1}{2}})\delta^{\frac{1}{2}},
			\nonumber
		\end{equation}
		where $\delta$ and $\eta$ are two invertible positive operators, $\mathrm{ln}_{\lambda}X=\frac{X^{\lambda}-I}{\lambda}$ for positive definite operator $X$ and identity operator $I$ with $\lambda\in (0,1]$. We can rewrite $T_{\lambda}(\delta||\eta)$ in another form by introducing the notation $\delta\sharp_{\lambda}\eta=\delta^{\frac{1}{2}}(\delta^{-\frac{1}{2}}\eta\delta^{-\frac{1}{2}})^\lambda\delta^{\frac{1}{2}}$:
		\begin{equation}
			T_{\lambda}(\delta||\eta)=\frac{1}{\lambda}(\delta\sharp_{\lambda}\eta-\delta).
			\nonumber
		\end{equation}
		\par
		We observe that similar to the relationship between $\mathcal{M}_{\alpha,z}^R(\delta)$ and $\mathcal{M}^V(\delta)$, the extension of $\mathcal{M}_q^T(\delta)$ is also possible.
		\begin{lemma}{Lemma}\label{LM2}$\mathrm{\cite{Furuichi2005}}$
			For positive definite density matrices $\delta$ and $\eta$, a real parameter $a$ with $a>0$, and $\lambda\in(0,1]$, the following inequalities hold:
			\begin{equation}
				\begin{cases}
					T_{\lambda}(\delta||\eta)\geqslant\delta\sharp_{\lambda}\eta-\frac{1}{a}\delta\sharp_{\lambda-1}\eta+(\mathrm{ln}_\lambda\frac{1}{a})\delta,
					\\
					T_{\lambda}(\delta||\eta)\leqslant\frac{1}{a}\eta-\delta-(\mathrm{ln}_{\lambda}\frac{1}{a})\delta\sharp_{\lambda}\eta,
				\end{cases}
				\nonumber
			\end{equation}
			where $\mathrm{ln}_{\lambda}\frac{1}{a}=\frac{(\frac{1}{a})^{\lambda}-1}{\lambda}$. Morever, $T_{\lambda}(\delta||\eta)=0$ if and only if $\delta=\eta$ \cite{Fujii1989}.
		\end{lemma}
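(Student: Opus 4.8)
The plan is to reduce both operator inequalities to one-variable scalar estimates by a congruence argument together with the functional calculus. The key device is the positive definite operator $X:=\delta^{-\frac12}\eta\delta^{-\frac12}$, in terms of which all three building blocks are conjugates of functions of the single operator $X$:
\[
T_{\lambda}(\delta||\eta)=\delta^{\frac12}\,\frac{X^{\lambda}-I}{\lambda}\,\delta^{\frac12},\qquad
\delta\sharp_{\lambda}\eta=\delta^{\frac12}X^{\lambda}\delta^{\frac12},\qquad
\delta\sharp_{\lambda-1}\eta=\delta^{\frac12}X^{\lambda-1}\delta^{\frac12}.
\]
Because conjugation by the positive definite operator $\delta^{-\frac12}$ preserves the operator order, I would conjugate each claimed inequality by $\delta^{-\frac12}$ on both sides. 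The first inequality then becomes
\[
\frac{X^{\lambda}-I}{\lambda}\geqslant X^{\lambda}-\frac1a X^{\lambda-1}+\Big(\mathrm{ln}_{\lambda}\frac1a\Big)I,
\]
and the second becomes
\[
\frac{X^{\lambda}-I}{\lambda}\leqslant \frac1a X-I-\Big(\mathrm{ln}_{\lambda}\frac1a\Big)X^{\lambda}.
\]

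In each of these every term is a continuous function of the \emph{same} operator $X$, so all the summands commute and the operator inequality is equivalent, via the spectral theorem, to the corresponding scalar inequality holding at each eigenvalue of $X$; as $X$ ranges over all positive definite operators this is exactly the requirement that the scalar inequality hold for every $x>0$. Writing $b=\frac1a$ and $\mathrm{ln}_{\lambda}b=\frac{b^{\lambda}-1}{\lambda}$, a short algebraic rearrangement reduces the first scalar inequality, under the substitution $t=x/b$, to
\[
\frac{t^{\lambda}-1}{\lambda}\geqslant t^{\lambda}-t^{\lambda-1}\qquad(t>0),
\]
and the second, under $s=bx$, to
\[
\frac{s^{\lambda}-1}{\lambda}\leqslant s-1\qquad(s>0).
\]

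Both are elementary and I would finish them by calculus. For the lower bound, $g(t)=\frac{t^{\lambda}-1}{\lambda}-t^{\lambda}+t^{\lambda-1}$ satisfies $g(1)=0$ and $g'(t)=(1-\lambda)t^{\lambda-2}(t-1)$, so $t=1$ is its global minimum and $g\geqslant0$. For the upper bound, $h(s)=s-1-\frac{s^{\lambda}-1}{\lambda}$ satisfies $h(1)=0$ and $h'(s)=1-s^{\lambda-1}$, which changes sign only at $s=1$, making $s=1$ the global minimum, so $h\geqslant0$; these are the $\lambda$-deformed versions of the classical bounds $1-\frac1t\leqslant\ln t\leqslant t-1$, recovered in the limit $\lambda\to0$. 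For the equality clause, $T_{\lambda}(\delta||\eta)=0$ forces $\delta\sharp_{\lambda}\eta=\delta$, hence $X^{\lambda}=I$ after conjugation; since $X>0$ and $\lambda\neq0$ the map $X\mapsto X^{\lambda}$ is injective, so $X=I$, i.e.\ $\eta=\delta$, and the converse is immediate.

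The one step that genuinely needs care is the reduction to scalars: it is legitimate only because, after the congruence by $\delta^{-\frac12}$, every operator appearing is a function of the single operator $X$, so the inequalities are commutative and the spectral theorem applies termwise. I do not anticipate any further obstacle, as the residual scalar estimates are routine.
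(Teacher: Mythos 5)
Your proof is correct. Note that the paper itself offers no proof of this lemma: it is quoted from Furuichi, Yanagi and Kuriyama \cite{Furuichi2005} (with the equality clause from \cite{Fujii1989}), so there is no in-paper argument to compare against. Your derivation is a sound, self-contained reconstruction and follows essentially the standard route of the cited source: congruence by $\delta^{-\frac12}$ reduces everything to functions of the single positive definite operator $X=\delta^{-\frac12}\eta\delta^{-\frac12}$, the functional calculus converts the operator inequalities into scalar inequalities on the spectrum, and the substitutions $t=x/b$, $s=bx$ reduce those to the $\lambda$-deformed bounds $1-\frac1t\leqslant\ln_{\lambda}t\leqslant t-1$, which your calculus verifications establish (correctly degenerating to identities at $\lambda=1$). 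The equality clause via injectivity of $X\mapsto X^{\lambda}$ on positive definite operators is also right. The one point you flag as delicate --- that the termwise spectral reduction is only valid because all operators involved are functions of the same $X$ --- is exactly the right thing to be careful about, and you handle it properly.
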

		\begin{lemma}{Lemma}\label{LM3}$\mathrm{\cite{Ando1979}}$
			If $\Phi$ is a positive linear map, then, for any positive definite matrices $A$ and $B$:
			\begin{equation}
				\Phi(A\sharp_{\lambda}B)\leqslant \Phi(A)\sharp_{\lambda}\Phi(B).
				\nonumber
			\end{equation}
		\end{lemma}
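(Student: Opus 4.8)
The plan is to recognize $\sharp_{\lambda}$ as the Kubo--Ando operator mean attached to the operator monotone function $t\mapsto t^{\lambda}$, and to prove the transformer inequality by first settling the symmetric case $\lambda=\tfrac12$ and then bootstrapping to a dense set of exponents. For $\lambda\in\{0,1\}$ the claim is trivial, since $A\sharp_{0}B=A$ and $A\sharp_{1}B=B$ while $\Phi$ is linear. The base case $\lambda=\tfrac12$ rests on the extremal description of the geometric mean,
\begin{equation*}
	A\sharp_{1/2}B=\max\left\{X=X^{\dagger}:\begin{pmatrix}A & X\\ X & B\end{pmatrix}\geqslant 0\right\},
\end{equation*}
together with the positivity of $\left(\begin{smallmatrix}A & A\sharp_{1/2}B\\ A\sharp_{1/2}B & B\end{smallmatrix}\right)$, which one verifies by the congruence $\mathrm{diag}(A^{-1/2},A^{-1/2})$ reducing it to $\left(\begin{smallmatrix}I & C^{1/2}\\ C^{1/2} & C\end{smallmatrix}\right)\geqslant 0$ with $C=A^{-1/2}BA^{-1/2}$, the latter being manifestly of rank-one square form.

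First I would apply the entrywise amplification of $\Phi$ to this $2\times2$ block to get
\begin{equation*}
	\begin{pmatrix}\Phi(A) & \Phi(A\sharp_{1/2}B)\\ \Phi(A\sharp_{1/2}B) & \Phi(B)\end{pmatrix}\geqslant 0,
\end{equation*}
and then read off $\Phi(A\sharp_{1/2}B)\leqslant\Phi(A)\sharp_{1/2}\Phi(B)$ from the maximality characterization, noting that $\Phi$ preserves self-adjointness. The maps $\Phi$ relevant to this paper are the conjugation and the Kraus channels, which are completely positive and hence $2$-positive, so the amplification is legitimately positive; for a general positive $\Phi$ one instead routes the argument through the parallel sum and the Kubo--Ando integral representation, which is the content of Ando's original proof.

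To pass from $\lambda=\tfrac12$ to all exponents I would exploit the midpoint identity
\begin{equation*}
	A\sharp_{(\lambda_{1}+\lambda_{2})/2}B=(A\sharp_{\lambda_{1}}B)\sharp_{1/2}(A\sharp_{\lambda_{2}}B),
\end{equation*}
which follows from the congruence invariance $T(X\sharp_{1/2}Y)T^{\dagger}=(TXT^{\dagger})\sharp_{1/2}(TYT^{\dagger})$ applied with $T=A^{1/2}$, since $C^{\lambda_{1}}$ and $C^{\lambda_{2}}$ commute. Assuming the inequality for $\lambda_{1}$ and $\lambda_{2}$, I would chain the $\lambda=\tfrac12$ case (applied to the pair $A\sharp_{\lambda_{1}}B,\,A\sharp_{\lambda_{2}}B$) with the joint operator monotonicity of $\sharp_{1/2}$ in its two arguments to obtain it for $(\lambda_{1}+\lambda_{2})/2$; starting from $\{0,\tfrac12,1\}$ this produces every dyadic rational in $[0,1]$, and continuity of $\lambda\mapsto A\sharp_{\lambda}B$ together with closedness of the positive cone then delivers the full range $\lambda\in(0,1]$.

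The hard part is genuinely the base case: both the block-positivity input and, for non-$2$-positive $\Phi$, the reduction to parallel sums carry the analytic weight, whereas the exponent bootstrap is a purely formal induction once joint monotonicity of the geometric mean is in hand.
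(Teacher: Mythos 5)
The paper offers no proof of this lemma at all --- it is imported verbatim from Ando's 1979 paper as a citation --- so there is no in-text argument to compare against; what you have written is one of the two standard proofs of the transformer inequality. Your chain of steps checks out: the congruence reduction of $\left(\begin{smallmatrix}A & A\sharp_{1/2}B\\ A\sharp_{1/2}B & B\end{smallmatrix}\right)$ to $\left(\begin{smallmatrix}I & C^{1/2}\\ C^{1/2} & C\end{smallmatrix}\right)\geqslant 0$, the maximality characterization (which follows from the Schur complement together with operator monotonicity of the square root), the midpoint identity via commutation of $C^{\lambda_1}$ and $C^{\lambda_2}$, joint monotonicity of $\sharp_{1/2}$ (itself a consequence of the same block characterization), and the dyadic-plus-continuity bootstrap are all correct. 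Compared with Ando's integral-representation route through parallel sums, your argument is more elementary and self-contained, at the price of generality.

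That price is the one genuine gap, and you name it yourself: the entrywise amplification sending $\left(\begin{smallmatrix}A & X\\ X & B\end{smallmatrix}\right)$ to $\left(\begin{smallmatrix}\Phi(A) & \Phi(X)\\ \Phi(X) & \Phi(B)\end{smallmatrix}\right)$ preserves positivity only when $\Phi$ is $2$-positive, whereas the lemma is stated for arbitrary positive linear maps; for merely positive $\Phi$ you fall back on exactly the parallel-sum argument the paper already cites, so your proof does not fully replace the reference. The restriction is harmless for this paper, since the only $\Phi$ fed into Lemma 3 (in the proof of Theorem 7) is a real operation, hence completely positive --- but your aside listing ``the conjugation'' among the relevant maps is slightly off: entrywise conjugation of a Hermitian matrix is the transpose map, which is positive but famously not $2$-positive; fortunately the paper uses $\Phi(\delta^*)=[\Phi(\delta)]^*$ as a separate identity and never applies Lemma 3 to the conjugation itself. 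A minor further point worth a sentence: in the bootstrap you invoke the $\lambda=\tfrac12$ case and joint monotonicity for $\Phi(A\sharp_{\lambda_i}B)$, which for a general positive $\Phi$ need only be positive semidefinite; this is repaired by the usual $\epsilon I$ perturbation or by noting that the maximal characterization extends to the closed cone.
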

		\begin{theorem}{Theorem}\label{TH7}
			The function
			\begin{equation}
				\mathcal{M}_{\lambda}^O(\delta)=1-\mathrm{Tr}(\delta\sharp_{\lambda}\delta^*),
				\nonumber
			\end{equation}
			where $\lambda\in (0,1)$, is an imaginarity measure: 
		\end{theorem}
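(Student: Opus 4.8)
The plan is to follow the template of the proof of Theorem~\ref{TH1} and reduce the claim to verifying the three conditions $(I_1)$, $(I_2)$ and $(I_5)$, now exploiting the structural properties of the weighted geometric mean $\sharp_\lambda$ rather than those of $f_{\alpha,z}$. Throughout I would write $g(\delta):=\mathrm{Tr}(\delta\sharp_\lambda\delta^*)$, so that $\mathcal{M}_\lambda^O(\delta)=1-g(\delta)$, and record that for a positive definite density matrix $\delta$ its entrywise conjugate $\delta^*=\delta^T$ is again positive definite with $\mathrm{Tr}(\delta^*)=\mathrm{Tr}(\delta)=1$, so that $\delta\sharp_\lambda\delta^*$ is well defined.

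For $(I_1)$ I would start from the operator arithmetic--geometric mean inequality $\delta\sharp_\lambda\delta^*\leqslant(1-\lambda)\delta+\lambda\delta^*$ (equivalently, the upper bound of Lemma~\ref{LM2} with $a=1$, which reads $T_\lambda(\delta||\delta^*)\leqslant\delta^*-\delta$). Taking the trace gives $g(\delta)\leqslant(1-\lambda)+\lambda=1$, hence $\mathcal{M}_\lambda^O(\delta)\geqslant0$. For faithfulness the reverse implication is immediate: if $\delta=\delta^*$ then $\delta\sharp_\lambda\delta=\delta$ and $g(\delta)=1$. The forward implication is where the work lies: from $g(\delta)=1=\mathrm{Tr}[(1-\lambda)\delta+\lambda\delta^*]$ the positive semidefinite operator $(1-\lambda)\delta+\lambda\delta^*-\delta\sharp_\lambda\delta^*$ has vanishing trace and is therefore $0$, so $\delta\sharp_\lambda\delta^*=(1-\lambda)\delta+\lambda\delta^*$; invoking the equality case of the operator AM--GM inequality (equality holds iff the two arguments coincide) then yields $\delta=\delta^*$, i.e.\ $\delta\in\mathcal{R}$.

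For $(I_2)$ I would first use, exactly as in Theorem~\ref{TH1}, that a real operation satisfies $\Lambda(\delta^*)=[\Lambda(\delta)]^*$. Since a real operation is a positive (indeed CPTP) linear map, Lemma~\ref{LM3} gives $\Lambda(\delta\sharp_\lambda\delta^*)\leqslant\Lambda(\delta)\sharp_\lambda\Lambda(\delta^*)=\Lambda(\delta)\sharp_\lambda[\Lambda(\delta)]^*$. Taking the trace and using trace preservation, $g(\delta)=\mathrm{Tr}[\Lambda(\delta\sharp_\lambda\delta^*)]\leqslant g(\Lambda(\delta))$, whence $\mathcal{M}_\lambda^O(\Lambda(\delta))\leqslant\mathcal{M}_\lambda^O(\delta)$. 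Here one restricts attention to real operations keeping $\delta$ positive definite, or passes to a limit, so that the output stays within the domain on which $\mathcal{M}_\lambda^O$ is defined.

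For $(I_5)$ I would use two elementary properties of $\sharp_\lambda$: it respects direct sums, $(A_1\oplus A_2)\sharp_\lambda(B_1\oplus B_2)=(A_1\sharp_\lambda B_1)\oplus(A_2\sharp_\lambda B_2)$, and it is positively homogeneous, $(pA)\sharp_\lambda(pB)=p(A\sharp_\lambda B)$ for $p>0$. Applying these to $\delta=p_1\delta_1\oplus p_2\delta_2$ and $\delta^*=p_1\delta_1^*\oplus p_2\delta_2^*$ gives $\delta\sharp_\lambda\delta^*=p_1(\delta_1\sharp_\lambda\delta_1^*)\oplus p_2(\delta_2\sharp_\lambda\delta_2^*)$; taking the trace and using $p_1+p_2=1$ yields $\mathcal{M}_\lambda^O(p_1\delta_1\oplus p_2\delta_2)=p_1\mathcal{M}_\lambda^O(\delta_1)+p_2\mathcal{M}_\lambda^O(\delta_2)$. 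I expect the main obstacle to be the forward direction of faithfulness in $(I_1)$, namely upgrading the scalar equality $g(\delta)=1$ to the operator equality and then invoking the equality characterization of the operator AM--GM inequality; by contrast, the monotonicity $(I_2)$ and additivity $(I_5)$ steps are routine consequences of Lemma~\ref{LM3} and the algebra of $\sharp_\lambda$.
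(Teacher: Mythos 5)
Your proposal is correct and follows essentially the same route as the paper: reduce the claim to $(I_1)+(I_2)+(I_5)$, bound $\mathrm{Tr}(\delta\sharp_\lambda\delta^*)$ by the operator arithmetic--geometric mean inequality, use Lemma~\ref{LM3} together with $\Lambda(\delta^*)=[\Lambda(\delta)]^*$ and trace preservation for monotonicity, and compute directly on direct sums for $(I_5)$. The only point of divergence is faithfulness, where you argue via the equality case of the operator AM--GM inequality (a positive semidefinite operator with zero trace vanishes, and then strict concavity of $t\mapsto t^\lambda$ forces $\delta=\delta^*$) instead of citing the characterization $T_\lambda(\delta||\eta)=0\iff\delta=\eta$ from Lemma~\ref{LM2}; your version is more self-contained and makes explicit a step the paper leaves implicit.
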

		\begin{proof}
			
			First, we can see $\mathrm{Tr}(\delta\sharp_{\alpha}\delta^*)\leqslant\mathrm{Tr}[(1-\alpha)\delta+\alpha \delta^*]=1$ in \cite{Bhatia2018}, thus $\mathcal{M}_{\alpha}^O(\delta)\geqslant 0$. In conjunction with the discussion from Lemma 2, the condition $(I_1)$ is satisfied.
			\par 
			Using Lemma 3, $\Phi(\delta^*)=[\Phi(\delta)]^*$ and the fact that $\Phi$ is trace-preserving, we obtain:
			\begin{eqnarray*}
				\mathrm{Tr}(\delta\sharp_{\alpha}\delta^*)=\mathrm{Tr}[\Phi(\delta\sharp_{\alpha}\delta^*)]\leqslant\mathrm{Tr}[ \Phi(\delta)\sharp_{\alpha}\Phi(\delta^*)]=\mathrm{Tr}[\Phi(\delta)\sharp_{\alpha}\Phi(\delta)^*].
			\end{eqnarray*}
			Thus, the condition $(I_2)$ is satisfied.
			\par
			Suppose $\delta=d_1\delta_1\oplus d_2\delta_2$ with $d_1+d_2=1$. Then:
			\begin{eqnarray*}
				&&\mathcal{M}_{\alpha}^O(\delta)=1-\mathrm{Tr}(\delta\sharp_{\alpha}\delta^*)=1-\mathrm{Tr}[(d_1\delta_1\oplus d_2\delta_2)\sharp_{\alpha}(d_1\delta_1^*\oplus d_2\delta_2^*)]
				\\
				&=&1-\mathrm{Tr}\left\lbrace (d_1\delta_1\oplus d_2\delta_2)^{\frac{1}{2}}\left[ (d_1\delta_1\oplus d_2\delta_2)^{-\frac{1}{2}}(d_1\delta_1^*\oplus d_2\delta_2^*)(d_1\delta_1\oplus d_2\delta_2)^{-\frac{1}{2}}\right]^\alpha (d_1\delta_1\oplus d_2\delta_2)^{\frac{1}{2}}\right\rbrace
				\\
				&=&1-\mathrm{Tr}\left[ \bigoplus_{j=1}^2 d_j^{\frac{1}{2}}(d_j^{-\frac{1}{2}}d_jd_j^{-\frac{1}{2}})^{\alpha}d_j^{\frac{1}{2}}\delta_j^{\frac{1}{2}}(\delta_j^{-\frac{1}{2}}\delta_j^*\delta_j^{-\frac{1}{2}})^{\alpha}\delta_j^{\frac{1}{2}}\right]
				\\
				&=&1-\mathrm{Tr}\left[ \bigoplus_{j=1}^2 d_j\delta_j^{\frac{1}{2}}(\delta_j^{-\frac{1}{2}}\delta_j^*\delta_j^{-\frac{1}{2}})^{\alpha}\delta_j^{\frac{1}{2}}\right]
				\\
				&=&d_1\left\lbrace 1-\mathrm{Tr}[\delta_1^{\frac{1}{2}}(\delta_1^{-\frac{1}{2}}\delta_1^*\delta_1^{-\frac{1}{2}})^{\alpha}\delta_1^{\frac{1}{2}}]\right\rbrace+d_2\left\lbrace 1-\mathrm{Tr}[\delta_2^{\frac{1}{2}}(\delta_2^{-\frac{1}{2}}\delta_2^*\delta_2^{-\frac{1}{2}})^{\alpha}\delta_2^{\frac{1}{2}}]\right\rbrace
				\\
				&=&d_1\mathrm{Tr}(\delta_1\sharp\delta_1^*)+d_2\mathrm{Tr}(\delta_2\sharp\delta_2^*)=d_1\mathcal{M}_{\alpha}^O(\delta_1)+d_2\mathcal{M}_{\alpha}^O(\delta_2).
			\end{eqnarray*}
			\par
			This proves the condition $(I_5)$.
		\end{proof}
		
		\subsection{$\mathcal{M}_{\lambda}^O$ versus $\mathcal{M}_{\alpha,z}^R$ and $\mathcal{M}_{\lambda}^T$}
		
		In our framework, these measures can be compared only for positive definite 
		density matrices.
		\par 
		From \cite{Bhatia2018}, we have the inequality:
		\begin{equation}
			\mathrm{Tr}(\delta\sharp_{\lambda}\delta^*)\leqslant\mathrm{Tr}[\delta^{1-\lambda}(\delta^*)^{\lambda}]=\mathrm{Tr}[\delta^{\lambda}(\delta^*)^{1-\lambda}].
			\nonumber
		\end{equation}
		Therefore, we can establish a connection between $\mathcal{M}_{\lambda}^O$, $\mathcal{M}_{\alpha,z}^R$, and $\mathcal{M}_{\lambda}^T$.
		\begin{theorem}{Theorem}\label{TH8}
			For positive definite density matrices $\delta$,
			\begin{equation}
				\mathcal{M}_{\alpha}^R(\delta)\leqslant\mathcal{M}_{\alpha,z}^R(\delta)\leqslant\mathcal{M}_{\alpha}^T(\delta)\leqslant\mathcal{M}_{\alpha}^O(\delta).
				\nonumber
			\end{equation}
		\end{theorem}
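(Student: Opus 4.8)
The plan is to read the claimed chain
\[
\mathcal{M}_{\alpha}^R(\delta)\leqslant\mathcal{M}_{\alpha,z}^R(\delta)\leqslant\mathcal{M}_{\alpha}^T(\delta)\leqslant\mathcal{M}_{\alpha}^O(\delta)
\]
as three separate links, and to notice that the first two are already established while only the third is genuinely new. Indeed, the pair $\mathcal{M}_{\alpha}^R(\delta)\leqslant\mathcal{M}_{\alpha,z}^R(\delta)\leqslant\mathcal{M}_{\alpha}^T(\delta)$ is nothing but Theorem 5 evaluated at $\delta$: Theorem 5 is stated for an arbitrary quantum state, and a positive definite density matrix is in particular such a state, so shrinking the domain to invertible $\delta$ preserves both inequalities. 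I would therefore dispose of these two links with a single sentence citing Theorem 5.

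For the last link $\mathcal{M}_{\alpha}^T(\delta)\leqslant\mathcal{M}_{\alpha}^O(\delta)$ I would unwind both definitions. Writing $\mathcal{M}_{\alpha}^T(\delta)=1-\mathrm{Tr}[\delta^{\alpha}(\delta^*)^{1-\alpha}]$ and $\mathcal{M}_{\alpha}^O(\delta)=1-\mathrm{Tr}(\delta\sharp_{\alpha}\delta^*)$, the desired inequality is equivalent, after subtracting $1$ and reversing sign, to $\mathrm{Tr}(\delta\sharp_{\alpha}\delta^*)\leqslant\mathrm{Tr}[\delta^{\alpha}(\delta^*)^{1-\alpha}]$. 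This is precisely the bound displayed immediately before the theorem, namely $\mathrm{Tr}(\delta\sharp_{\lambda}\delta^*)\leqslant\mathrm{Tr}[\delta^{1-\lambda}(\delta^*)^{\lambda}]=\mathrm{Tr}[\delta^{\lambda}(\delta^*)^{1-\lambda}]$, specialized to $\lambda=\alpha$. Invertibility of $\delta$ is exactly what makes $\delta\sharp_{\alpha}\delta^*=\delta^{1/2}(\delta^{-1/2}\delta^*\delta^{-1/2})^{\alpha}\delta^{1/2}$ well defined, which is why the domain must be restricted to positive definite density matrices.

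The only point that requires any care is the direction of the inequality: since each measure has the form $1-(\cdot)$, a \emph{larger} inner trace yields a \emph{smaller} measure, so I must check that Bhatia's bound puts the geometric-mean term $\mathrm{Tr}(\delta\sharp_{\alpha}\delta^*)$ on the smaller side. It does, which correctly places $\mathcal{M}_{\alpha}^O$ at the end of the chain as the largest quantity. A minor bookkeeping matter is the symmetric rewriting $\mathrm{Tr}[\delta^{1-\alpha}(\delta^*)^{\alpha}]=\mathrm{Tr}[\delta^{\alpha}(\delta^*)^{1-\alpha}]$; it holds because this trace is real, being the trace of a product of the two Hermitian matrices $\delta^{\alpha}$ and $(\delta^*)^{1-\alpha}$, so it equals its own complex conjugate. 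With these remarks the three links concatenate and the theorem follows, with no substantial computation involved.
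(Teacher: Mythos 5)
Your proposal is correct and follows essentially the same route as the paper, which (without a formal proof environment) derives the theorem by combining Theorem 5 for the first two inequalities with the Bhatia--Jain--Lim bound $\mathrm{Tr}(\delta\sharp_{\alpha}\delta^*)\leqslant\mathrm{Tr}[\delta^{\alpha}(\delta^*)^{1-\alpha}]$ displayed just before the statement for the third. Your added remarks on the direction of the inequality, the reality of the trace, and why invertibility is needed are all sound.
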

		However, when $\lambda\neq\alpha$, the situation becomes more complicated.
		\begin{example}{Example}
			Consider a positive definite density matrix
			\begin{equation}
				\delta_0=\frac{1}{10}
				\begin{pmatrix}
					6 & 1+i
					\\
					1+i & 4
				\end{pmatrix}.
				\nonumber
			\end{equation}
			Then
			\begin{equation}
				\mathcal{M}_{0.3}^O(\delta_0)\leqslant\mathcal{M}_{0.5}^R(\delta_0)\leqslant\mathcal{M}_{0.5}^T(\delta_0).
				\nonumber
			\end{equation}
		\end{example}
		
		\section{Conclusion}\label{se5}
		
		We have introduced a new imaginarity measure, $\mathcal{M}^R_{\alpha,z}$. Its relevant properties and a comparison of $\mathcal{M}^R_{\alpha,z}$ and $\mathcal{M}^T_q$ were provided. We have also introduced an imaginarity measure $\mathcal{M}^O_{\lambda}$ for positive definite density matrices, and compared it with the above measures.


\begin{thebibliography}{10}
			
			\bibitem{Plenio2005}
			M. B. Plenio and S. Virmani: An introduction to entanglement measures, {\em Quantum Inf. Comput.} {\bf 7}, 1-51 (2005).
			
			\bibitem{Baumgratz2014}
			T. Baumgratz, M. Cramer, and M. B. Plenio: Quantifying coherence, {\em Phys. Rev. Lett.} {\bf 113}, 140401 (2014).
			
			\bibitem{Hickey2018}
			A. Hickey and G. Gour: Quantifying the imaginarity of quantum mechanics, {\em	J. Phys. A: Math. Theor.} {\bf 51}, 414009 (2018).
			
			\bibitem{Yu2016}
			X. D. Yu, D. J. Zhang, G. F. Xu and D. M. Tong: Alternative framework for quantifying coherence, {\em Phys. Rev. A} {\bf 94}, 060302 (2016).
			
			\bibitem{Xue2021}
			S. N. Xue, J. Guo, P. Li, M. F. Ye and Y. M. Li: Quantification of resource theory of imaginarity, {\em Quantum Inf. Process.} {\bf 20}, 383 (2021).
			
			\bibitem{Wu2020}
			K. D. Wu, T. V. Kondra, S. Rana, C. M. Scandolo, G. Y. Xiang, C. F. Li, G. C. Guo and A. Streltsov: Operational resource theory of imaginarity, {\em Phys. Rev. Lett.} {\bf 126}, 090401 (2021).
			
			\bibitem{Wu2023}
			K. D. Wu, T. V. Kondra, C. M. Scandolo, S. Rana, G. Y. Xiang, C. F. Li, G. C. Guo and A. Streltsov: Resource theory of imaginarity in distributed scenarios, {\em Commun. Phys.} {\bf 7}, 171 (2024).
			
			\bibitem{Chen2022}
			Q. Chen, T. Gao and F. L. Yan: Measures of imaginarity and quantum state order, {\em Sci. China: Phys., Mech. Astron.} {\bf 66}, 280312 (2023).
			
			\bibitem{Abe2003}
			S. Abe: Monotonic decrease of the quantum nonadditive divergence by projective measurements, {\em Phys. Rev. A} {\bf 312}, 336-338 (2003).
			
			\bibitem{Xu2023}
			J. W. Xu: Quantifying the imaginarity of quantum states via Tsallis relative entropy, {\em 	Phys. Lett. A} {\bf 528}, 130024 (2024).
			
			\bibitem{Petz1986}
			D. Petz: Quasi-entropies for finite quantum systems, {\em Rep. Math. Phys.} {\bf 23}, 57-65 (1986).
			
			\bibitem{MuellerLennert2013}
			M. M{\"u}ller-Lennert, F. Dupuis, O. Szehr, S. Fehr and M. Tomamichel: On quantum R\'enyi entropies: a new definition, some properties and several conjectures, {\em J. Math. Phys.} {\bf 54}, 122203 (2013).
			
			\bibitem{Furuichi2005}
			S. Furuichi, K. Yanagi and K. Kuriyama: A note on operator inequalities of Tsallis relative operator entropy, {\em Linear Algebra Appl.} {\bf 407}, 19-31 (2005).
			
			\bibitem{Nikoufar2017}
			I. Nikoufar: Convexity of parameter extensions of some relative operator entropies with a perspective approach, {\em Glasgow Math. J.} {\bf 62}, 737-744 (2020).
			
			\bibitem{Petz2007}
			D. Petz: Bregman divergence as 
			relative operator entropy, {\em Acta Math. Hung.} {\bf 116}, 127-131 (2007).
			
			\bibitem{Kostecki2017}
			R. P. Kostecki: Postquantum Br\`{e}gman relative entropies (2017); \par 
			https://doi.org/10.48550/arXiv.1710.01837.
			
			\bibitem{Dinh2021}
			T. H. Dinh, C. T. L{\^e}, B. K. Vo and T. D. Vuong: The $\alpha$-z-Bures Wasserstein divergence, {\em Linear Algebra Appl.} {\bf 624}, 267-280 (2021).
			
			\bibitem{Dinh2021a}
			T. H. Dinh, A. V. Le, C. T. L{\^e} and N. Y. Phan: The matrix Heinz mean and related divergence, {\em Hacettepe J. Math. Stat.} {\bf 51}, 362-372 (2022).
			
			\bibitem{Dinh2023}
			T. H. Dinh, H. Du, A. N. D. Nguyen and T. D. Vuong: On new quantum divergences, {\em Linear Multilinear A} {\bf 72}, 1781-1795 (2023).
			
			\bibitem{Hiai2013}
			F. Hiai: Concavity of certain matrix trace and norm functions, {\em Linear Algebra Appl.} {\bf 439}, 1568-1589 (2013).
			
			\bibitem{Audenaert2015}
			K. M. R. Audenaert and N. Datta: $\alpha$-z-R{\'e}nyi relative entropies, {\em J. Math. Phys.} {\bf 56}, 022202 (2015).
			
			\bibitem{Zhu2019}
			X. N. Zhu, Z. X. Jin and S. M. Fei: Quantifying quantum coherence based on the generalized $\alpha$–z-relative r{\'e}nyi entropy, {\em Quantum Inf. Process.} {\bf 18}, 179 (2019).
			
			\bibitem{Audenaert2007}
			K. M. R. Audenaert: On the Araki-Lieb-Thirring inequality (2007); \par 
			https://doi.org/10.48550/arXiv.math/0701129.
			
			\bibitem{Bhatia2018}
			R. Bhatia, T. Jain and Y. Lim: Strong convexity of sandwiched entropies and related optimization problems, {\em Rev. Math. Phys.} {\bf 30}, 1850014 (2018).
			
			\bibitem{Bhattacharya2023}
			S. Bhattacharya: Some log-convexity theorems on quantum entropies (2023);\par 
			https://doi.org/10.48550/arXiv.2309.02074.
			
			\bibitem{Conway1978}
			J. B. Conway: {\em Functions of one complex variable I}, Springer, New York 1978.
			
			\bibitem{Zhao2018}
			H. Q. Zhao and C. S. Yu: Coherence measure in terms of the Tsallis relative $\alpha$ entropy, {\em Sci. Rep.} {\bf 8}, 299 (2018).
			
			\bibitem{Fujii1989}
			J. I. Fujii and E. Kamei: Uhlmann's interpolational method for operator means, {\em Mathematica Japonica} {\bf 34}, 541-547 (1989).
			
			\bibitem{Ando1979}
			T. And{\^o}: Concavity of certain maps on positive definite matrices and applications to hadamard products, {\em Linear Algebra Appl.} {\bf 26}, 203-241 (1979).
			
		\end{thebibliography}
\end{document}